\documentclass{article}
\usepackage{graphicx} 
\usepackage{helvet}
\usepackage{amsmath}
\usepackage{amssymb}
\usepackage{amsthm}
\usepackage{mathtools}
\usepackage{hyperref}
\usepackage[svgnames]{xcolor}
\usepackage{graphicx}
\graphicspath{{./images/}}
\usepackage{geometry}
\geometry{margin=1.5in}
\usepackage{datetime}
\usepackage[shortlabels]{enumitem}

\newcommand{\F}{\mathbb{F}}
\newcommand{\Z}{\mathbb{Z}}
\newcommand{\fp}{(e_1,e_2)}
\newcommand{\rop}{(\alpha + u_1\tau,\alpha + u_2\tau)}
\newcommand{\specialcell}[2][c]{%
\begin{tabular}[#1]{@{}c@{}}#2\end{tabular}}
\parindent0pt
\parskip10pt
\baselineskip19pt
\newtheorem{theorem}{Theorem}[section]
\newtheorem{corollary}[theorem]{Corollary}
\newtheorem{lemma}[theorem]{Lemma}

\newtheorem{proposition}[theorem]{Proposition}

\theoremstyle{definition}

\theoremstyle{definition}
\newtheorem{definition}[theorem]{Definition}

\theoremstyle{remark}
\newtheorem{remark}[theorem]{Remark}

\title{ New Quantum MDS Codes with Flexible Parameters from Hermitian Self-Orthogonal GRS Codes }
\author{Oisin Campion\footnote{School of Mathematics and Statistics, University College Dublin, Ireland}, Fernando Hernando\footnote{Instituto Universitario de Matem\'aticas y Aplicaciones de Castell\'on and Departamento de Matem\'aticas, Universitat Jaume I, Campus de Riu Sec, 12071 Castell\'o, Spain},   Gary McGuire\footnote{School of Mathematics and Statistics, University College Dublin, Ireland}}
\date{\today}

\begin{document}

\maketitle

\begin{abstract}
    Let $q$ be a prime power. 
  Let  $\lambda>1$ be a divisor of $q-1$, and 
let $\tau>1$ and  $\rho>1$ be divisors of $q+1$. 
    Under certain conditions we prove that there exists an MDS stabilizer quantum code with length 
    $n=\lambda \tau \sigma$ where $2\le \sigma \le \rho$.
    This is a flexible construction, which
    includes new MDS parameters not known before.
\end{abstract}

\section{Introduction}

In this paper we will prove the following theorem.

\begin{theorem}
Let $q\ge 3$ be a prime power. Let $\lambda>1$ be a divisor of $q-1$, and 
let $\tau>1$ and  $\rho>1$ be divisors of $q+1$. 
Assume that $\gcd (\lambda,\tau)=1$. 
Let  $\kappa=\gcd (\lambda,\rho)\cdot  \gcd(\tau,\rho)$ and assume that $ \frac{\rho}{\kappa} \ge 2$.
Let $\sigma$ be any integer with $ \frac{\rho}{\kappa} \ge  \sigma \ge 2$.
Let $n = \lambda\tau\sigma$.
Let $T$ be chosen according to this table:
\begin{center}
\begin{tabular}{||c|c||} 
 \hline
  Conditions &  $T$ \\  
 \hline\hline
  $\lambda$ even &  $\frac{\lambda+4\tau}{2}$\\ 
 \hline
  \specialcell{$\lambda$ odd, and either $\lambda<\tau$, \\ $\tau$ even or $\rho=2$} &  $\lambda + \tau$   \\
 \hline
  $ \lambda $ odd, $\lambda >\tau, \tau $ odd, $\rho \neq 2$ &  $\frac{\lambda+3\tau}{2}$ \\

 \hline
\end{tabular}
\end{center}

Then for any $d$ with  $2\le d \le T$ there exists a $[[n,n-2d+2,d]]_q$ quantum MDS  code. 
\end{theorem}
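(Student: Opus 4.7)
The plan is to apply the Hermitian construction for quantum stabilizer codes: if $C$ is an $[n,k]_{q^2}$ GRS code satisfying $C \subseteq C^{\perp_H}$, then there exists a quantum MDS code with parameters $[[n, n-2k, k+1]]_q$. It therefore suffices, for each $d$ with $2 \le d \le T$, to construct a Hermitian self-orthogonal GRS code of length $n = \lambda\tau\sigma$ and dimension $d-1$ over $\F_{q^2}$.

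For the evaluation set I would take $A = G_\lambda \cdot H_\tau \cdot S$, where $G_\lambda \le \F_q^*$ is the unique subgroup of order $\lambda$, $H_\tau$ is the subgroup of order $\tau$ inside the cyclic norm-one subgroup of $\F_{q^2}^*$ of order $q+1$, and $S \subseteq H_\rho$ is a set of $\sigma$ representatives whose pairwise ratios avoid $G_\lambda \cdot H_\tau$. The hypotheses $\gcd(\lambda,\tau)=1$ and $\sigma \le \rho/\kappa$ are precisely what is needed for $A$ to consist of $n$ distinct elements; here $\kappa = \gcd(\lambda,\rho)\gcd(\tau,\rho)$ measures the overlap of $G_\lambda \cdot H_\tau$ with $H_\rho$, so that $H_\rho/(G_\lambda H_\tau \cap H_\rho)$ has exactly $\rho/\kappa$ cosets available for $S$.

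Next I would choose column multipliers of the form $v_i = c \cdot a_i^{\,t}$, where the integer shift $t$ is selected case-by-case. The Hermitian self-orthogonality relations $\sum_{i} v_i^{\,q+1}\, a_i^{\,r+qs} = 0$ for $0 \le r,s \le k-1$ collapse to the power-sum identities
\[
\sum_{a \in A} a^{m} \;=\; 0, \qquad m = (q+1)t + r + qs,\quad 0 \le r,s \le k-1.
\]
Because $A$ is built from cosets of subgroups of $\F_{q^2}^*$ of orders $\lambda,\tau$ and $\sigma$, each such sum factors, and the standard fact that $\sum_{x\in K} x^{m} = 0$ whenever $|K|\nmid m$ reduces the problem to showing that no exponent $m$ in the indicated range is simultaneously divisible by $\lambda$, $\tau$ and $\sigma$. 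The three rows of the table for $T$ correspond to the three admissible choices of $t$, one per parity/size regime: when $\lambda$ is even the shift $t = (\lambda+4\tau)/2$ is legal and gains an extra factor $\tfrac12$; when $\lambda$ is odd with $\tau$ even, or $\lambda<\tau$, or $\rho=2$, only the shift $t=\lambda+\tau$ is available; in the remaining regime $\lambda$ odd, $\lambda>\tau$, $\tau$ odd, $\rho\ne 2$ the shift $t=(\lambda+3\tau)/2$ becomes admissible and is optimal.

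The main obstacle is the combinatorial optimization of $t$ and $k$ in each regime: one must verify that the translated rectangle $\{(q+1)t + r + qs : 0\le r,s\le k-1\}$ of exponents contains no nonzero multiple of $\operatorname{lcm}(\lambda,\tau,\sigma)$ for $k$ as large as $T-1$, while simultaneously showing that the divisor conditions on $(q-1)$ and $(q+1)$ make the shift $t$ realizable as an integer. Once these power-sum identities are verified, the Hermitian construction immediately produces the family $[[n, n-2d+2, d]]_q$ of quantum MDS codes asserted in the theorem.
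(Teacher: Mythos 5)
Your overall framework (Hermitian construction from a self-orthogonal GRS code, evaluation set as a product set, twist vector chosen to kill power sums) is the same as the paper's, but there is a genuine gap in the key step, and it would make the proof fail as written.

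The problem is in the treatment of the factor coming from $S$. You take $S$ to be a set of $\sigma$ elements of $H_\rho$, and you then invoke the ``standard fact'' that $\sum_{x\in K}x^m=0$ when $|K|\nmid m$, reducing the self-orthogonality condition to the exponents $m$ avoiding simultaneous divisibility by $\lambda,\tau,\sigma$. But $S$ is not a subgroup of order $\sigma$ (generically $\sigma\nmid (q+1)$ and $\sigma \nmid \rho$), so $\sum_{s\in S}s^m$ is just a truncated geometric series: it equals $\sigma$ when $\rho\mid m$, and $(\zeta_\rho^{\sigma m}-1)/(\zeta_\rho^m-1)$ otherwise, which does not vanish by any divisibility condition on $\sigma$. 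Compounding this, a multiplicative twist $v_i = c\,a_i^{\,t}$ gives $v_i^{q+1}=c^{q+1}a_i^{(q+1)t}$, which cannot reweight the $\sigma$ summands over $S$ individually. The paper's essential trick is precisely to abandon a monomial twist over the $S$-direction and instead equip the $\sigma$ summands with arbitrary scalars $s_0,\ldots,s_{\sigma-1}\in\F_q^*$ with $\sum s_k=0$; then the $S$-factor becomes $\sum_{k=0}^{\sigma-1}s_k\zeta_\rho^{km}$, which is $\sum s_k=0$ whenever $\rho\mid m$, i.e.\ whenever $e_1\equiv e_2\pmod\rho$. Thus the third orthogonality condition in the paper is a congruence modulo $\rho$ (Theorem~\ref{thm_OC}), not modulo $\sigma$, and it is a vanishing condition, not a non-divisibility one. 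Your setup has no mechanism to reproduce this, so the self-orthogonality argument collapses whenever $\sigma$ is not a divisor of $q+1$, which is the generic (and interesting) case.

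A second, smaller mismatch: you tie the three values of $T$ to three ``admissible shifts $t=(\lambda+4\tau)/2,\ \lambda+\tau,\ (\lambda+3\tau)/2$.'' In the paper, the free parameter in the twist is $L$ (taken to be $2\tau-2$, $\tau-2$, or $2\tau-2$ respectively), and $T$ is an output of a separate combinatorial analysis: one finds the \emph{first failure point} $(T_1,T_2)$ in the integer lattice where all three conditions of Theorem~\ref{thm_OC} simultaneously fail, and sets $T=T_2+1$. The case split in the table comes from a parity analysis of the equation $2\gamma+\epsilon\tau=t\lambda+L$ (Sections~\ref{failurepoints}--\ref{section_first_failure_point}), not from which shifts are realizable. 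That optimization is the bulk of the paper's argument, and it is not addressed by your proposal beyond a one-sentence assertion. To repair the proof you would need both the $\sum s_k=0$ twist and the explicit first-failure-point computation.
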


Stabilizer quantum error-correcting codes have been studied by many authors because they can be constructed  from classical additive codes  in $\mathbb{F}_q^{2n}$ which are self-orthogonal with respect to a  Hermitian form. 
In particular, stabilizer codes can be obtained from suitable Hermitian self-orthogonal classical linear codes (see \cite{Ketkar} or \cite{Calderbank,AK} for details). 
In this article we will utilize this construction -- we use self-orthogonal generalized Reed-Solomon codes to 
 construct stabilizer quantum codes.

Quantum MDS codes are those achieving the quantum Singleton bound, 
so an MDS $[[n,k,d]]_q$ code has $n+2=k+2d$.
There are many papers on these types of codes (some recent papers are \cite{Fang,Ball,Liu-LiuX}). 
The MDS conjecture limits the length of a $q$-ary quantum MDS code to be at most $q^2+2$ (\cite{Ketkar}). MDS codes with length
smaller than $q+1$ are already known, so researchers have
recently concentrated on lengths between $q+1$ and $q^2+2$.
Our main theorem constructs new quantum MDS codes 
with lengths in this region. Many papers construct codes whose length
is either a multiple of $q-1$ or $q+1$; we  construct codes whose length 
 is not a multiple of $q-1$ or $q+1$.
It has been shown in \cite{PhysRevA.77.012308} that $d\le q+1$ for codes that are constructed with this GRS method,
and separately, 
the article \cite{JLLX2010} constructs many codes with $d<q/2$, so recently papers have 
concentrated on constructing codes with $ q/2 \le d \le q+1$.
We present some new codes whose length 
 is not a multiple of $q-1$ or $q+1$ and which have $d > q/2$.
 
In a recent paper \cite{Barbero24} we investigated codes with
length larger than $q^2+2$ by a similar method.

The paper is laid out as follows.
After the preliminaries in Section \ref{prelim}, we present our construction in 
Section \ref{section_code_construction}.
We use the Hermitian  construction with GRS codes.
Previous works using a twist vector have proved the existence of a twist
vector with the required properties, whereas a feature of our construction
is that we can be very explicit in the choice of twist vector.
In Sections \ref{failurepoints} and \ref{section_first_failure_point}
we prove our main theorem by a careful analysis of when self-orthogonality 
of our codes can be guaranteed.
Our main theorem is presented in Section \ref{mainsection} as Theorem \ref{main}.
In Section \ref{examples} we present some applications of Theorem \ref{main},
constructing MDS codes with parameters that are new.

\section{Preliminaries}\label{prelim}
Let $q$ be a prime power, with $q \ge 3$. We denote by $\F_q$ the field with $q$ elements. For two vectors $\boldsymbol{a}=(a_0,\ldots a_{n-1}), \boldsymbol{b}=(b_0,\ldots,b_{n-1})$ in $\F_{q^2}^n$, we define their Hermitian inner product to be
\begin{equation*}
    \boldsymbol{a}\cdot_h\boldsymbol{b} = \sum_{i=0}^{n-1}a_ib_i^q,
\end{equation*}
and their Euclidean inner product as
\begin{equation*}
    \boldsymbol{a}\cdot_e\boldsymbol{b} = \sum_{i=0}^{n-1}a_ib_i.
\end{equation*}
We use the symbol $\perp_h$ (respectively, $\perp_e$) to denote orthogonality under the Hermitian (respectively, Euclidean) product. For a vector subspace (code) $C$ of $\F_{q^2}^{n}$, we write $C^{\perp_h}$ (respectively, $C^{\perp_e}$) to denote the dual code with respect to the Hermitian (respectively, Euclidean) inner product. We will denote the minimum distance of $C$ by $d(C)$. 

For a non-negative integer $s$, and a codeword $\boldsymbol{c}=(c_0,\ldots,c_{n-1}) \in C$, we define $\textbf{c}^s := (c_0^s,\ldots,c_{n-1}^s)$, and 
\begin{equation*}
    C^s := \{ \boldsymbol{c}^s : \boldsymbol{c}\in C\} \subseteq \F_{q^2}^n.
\end{equation*}
We say that two codes are isometric if there exists a bijection between them that preserves Hamming weights. 

\begin{proposition}{(Classical Singleton Bound)}
    If a classical linear $[n,k,d]$ code exists, then 
    the parameters satisfy $k+d\le n+1$.
\end{proposition}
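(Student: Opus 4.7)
The plan is to prove the Singleton bound by the classical puncturing argument. Let $C$ be a linear $[n,k,d]$ code over $\F_q$, and fix any set $S$ of $d-1$ coordinate positions. Consider the puncturing map $\pi : C \to \F_q^{\,n-d+1}$ that deletes the coordinates indexed by $S$. The goal is to show that $\pi$ is injective on $C$; the bound will then follow immediately from elementary linear algebra.

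For injectivity, I would argue by linearity. Suppose $\boldsymbol{c}_1, \boldsymbol{c}_2 \in C$ satisfy $\pi(\boldsymbol{c}_1) = \pi(\boldsymbol{c}_2)$. Then $\boldsymbol{c}_1 - \boldsymbol{c}_2 \in C$ has support contained in $S$, so its Hamming weight is at most $|S| = d-1$. Since the minimum distance of $C$ is $d$, the only codeword of weight strictly less than $d$ is $\boldsymbol{0}$, forcing $\boldsymbol{c}_1 = \boldsymbol{c}_2$. Hence $\pi$ is an injective $\F_q$-linear map, and the image $\pi(C)$ is a subspace of $\F_q^{\,n-d+1}$ of dimension exactly $k$.

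To finish, I would simply observe that any subspace of $\F_q^{\,n-d+1}$ has dimension at most $n-d+1$, so $k \le n-d+1$, i.e. $k+d \le n+1$, which is the claimed inequality.

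There is essentially no obstacle here: the whole content is the injectivity of the puncturing map, which is immediate from the definition of minimum distance. An alternative route, should one prefer to avoid the puncturing language, would be a dimension count on the subcode of $C$ vanishing on any chosen set of $k-1$ coordinates: such a subcode is nontrivial by rank-nullity, producing a nonzero codeword of weight at most $n-k+1$ and yielding the same inequality $d \le n-k+1$. Either route gives a one-line argument once the setup is in place.
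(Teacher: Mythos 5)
Your argument is correct and is the standard puncturing proof of the Singleton bound: injectivity of the projection onto any $n-d+1$ coordinates follows from the minimum-distance condition, and the dimension bound $k \le n-d+1$ is then immediate. The alternative you sketch (vanishing on $k-1$ coordinates via rank–nullity) is equally fine. Note, however, that the paper states this proposition as a known preliminary fact and gives no proof of its own, so there is nothing in the source to compare against; your proof would simply be supplying what the authors take as standard background.
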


For quantum codes there is also a Singleton bound.

\begin{proposition}{(Quantum Singleton Bound)}
    If a $[[n,k,d]]_q$ stabilizer quantum code exists, then the parameters satisfy 
    $k+2d\le n+2$.
    
\end{proposition}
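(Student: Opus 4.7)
The plan is to exploit the standard correspondence between stabilizer quantum codes and classical additive codes: a $[[n,k,d]]_q$ stabilizer code is equivalent to an additive subgroup $C\subseteq\F_{q^2}^n$ that is self-orthogonal under a trace-Hermitian form (or equivalently, under the symplectic form on $\F_q^{2n}$), with $|C|=q^{n-k}$ and hence $|C^{\perp_h}|=q^{n+k}$. The minimum distance $d$ of the quantum code equals the minimum Hamming weight of the nonzero vectors in $C^{\perp_h}\setminus C$, and in the pure case simply the minimum weight of $C^{\perp_h}\setminus\{0\}$.

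In the pure case, I would apply a classical Singleton-type argument directly to $C^{\perp_h}$, viewed as an additive code over $\F_{q^2}$. Since every nonzero vector of $C^{\perp_h}$ has Hamming weight at least $d$, the projection onto any $n-d+1$ of the $n$ coordinates is injective; the image sits inside $\F_{q^2}^{n-d+1}$, which has cardinality $q^{2(n-d+1)}$. Therefore
\begin{equation*}
q^{n+k} = |C^{\perp_h}| \le q^{2(n-d+1)},
\end{equation*}
and rearranging yields exactly $k+2d\le n+2$.

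The main obstacle is the impure case, in which $C^{\perp_h}$ can contain nonzero vectors of weight strictly less than $d$; these must lie in $C$, and the naive puncturing argument above then only delivers the weaker bound $k+d\le n+1$ (because the kernel of the puncturing may be as large as all of $C$). To recover $k+2d\le n+2$ in full generality one would either invoke the Knill--Laflamme entropic proof of the quantum Singleton bound for arbitrary quantum codes (combining the quantum error-correction conditions with a purification of the code and subadditivity of von Neumann entropy), or use Rains' shadow enumerator machinery tailored to stabilizer codes; a complete, self-contained treatment along these lines can be found in \cite{Ketkar}. For the applications in the present paper every constructed code is pure, so the direct Singleton-on-$C^{\perp_h}$ argument above is what we really need.
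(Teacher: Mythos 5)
The paper states the Quantum Singleton Bound as background without proof, citing the stabilizer-code literature (e.g.\ \cite{Ketkar}) implicitly; there is no argument in the paper to compare against, so I will assess your proposal on its own merits.

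Your pure-case argument is correct and is the standard Singleton-style proof. Viewing $C\subseteq\F_{q^2}^n$ as an additive subgroup of size $q^{n-k}$ that is self-orthogonal under the trace-Hermitian (equivalently symplectic) form, the dual has size $q^{n+k}$, and in the pure case every nonzero vector of $C^{\perp_h}$ has weight at least $d$, so projection onto any $n-d+1$ coordinates is injective on $C^{\perp_h}$. The counting $q^{n+k}\le q^{2(n-d+1)}$ then gives $k+2d\le n+2$. You are also right that the naive puncturing argument degrades in the impure case, where low-weight vectors inside $C$ kill the injectivity, and that the fully general statement needs either the Knill--Laflamme entropic proof or Rains' shadow/enumerator machinery; referring this to \cite{Ketkar} is appropriate. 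Your closing remark that the codes constructed in this paper are pure is also correct: the codes here are $\F_{q^2}$-linear GRS codes, which are MDS with MDS Hermitian duals, so $d(C)\ge d(C^{\perp_h})$ holds and the self-contained argument suffices for the applications. So the proposal is a correct and reasonably complete treatment, with the impure case honestly flagged and delegated rather than proved; since the paper itself offers no proof, nothing more is required.
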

Codes that attain this bound are called quantum MDS codes. 
\begin{theorem}\label{thm_quantcode}
        Let $C$ be a linear $[n,k,d]$ error-correcting code over the field $\F_{q^2}$ such that $C\subseteq C^{\perp_h}$. Then, there exists an $[[n,n-2k,\geq d^{\perp_h}]]_q$ stabilizer quantum code, where $d^{\perp_h}$ stands for the minimum distance of $C^{\perp_h}$.
\end{theorem}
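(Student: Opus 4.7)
The plan is to reduce the Hermitian self-orthogonality condition to a trace-symplectic self-orthogonality condition over $\F_q^{2n}$, and then invoke the standard additive stabilizer construction cited in the introduction (\cite{AK}, \cite{Calderbank}, \cite{Ketkar}). The Hermitian framework used in the theorem is a well-known shortcut to this additive picture, so the proof is really a dictionary between two equivalent setups.

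First, I would fix an $\F_q$-basis $\{1,\omega\}$ of $\F_{q^2}$ and define the $\F_q$-linear isomorphism $\phi:\F_{q^2}^n\to\F_q^{2n}$ that sends $\boldsymbol{a}=\boldsymbol{a}_0+\omega\boldsymbol{a}_1$ to $(\boldsymbol{a}_0,\boldsymbol{a}_1)$. Under $\phi$ the code $C$ becomes an $\F_q$-additive code of size $q^{2k}$, and the Hamming weight (with paired coordinates on $\F_q^{2n}$) is preserved. Next I would check the key identity: for $\boldsymbol{a},\boldsymbol{b}\in\F_{q^2}^n$, the Hermitian product $\boldsymbol{a}\cdot_h\boldsymbol{b}=\sum a_i b_i^q$ satisfies $\operatorname{Tr}_{\F_{q^2}/\F_q}(\boldsymbol{a}\cdot_h\boldsymbol{b})=\Omega(\phi(\boldsymbol{a}),\phi(\boldsymbol{b}))$ for the appropriate trace-symplectic form $\Omega$ on $\F_q^{2n}$. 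Because $C$ is $\F_{q^2}$-linear, the condition $\boldsymbol{a}\cdot_h\boldsymbol{b}=0$ for all $\boldsymbol{a},\boldsymbol{b}\in C$ is equivalent to the two $\F_q$-conditions $\operatorname{Tr}(\boldsymbol{a}\cdot_h\boldsymbol{b})=\operatorname{Tr}(\omega\boldsymbol{a}\cdot_h\boldsymbol{b})=0$; hence $C\subseteq C^{\perp_h}$ translates exactly to $\phi(C)\subseteq \phi(C)^{\perp_\Omega}$.

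Third, I would apply the additive construction: an $\F_q$-additive self-orthogonal code $D\subseteq\F_q^{2n}$ of size $q^{n-\ell}$ under $\Omega$ yields an $[[n,\ell,d_Q]]_q$ stabilizer code, with $d_Q$ at least the minimum weight of $D^{\perp_\Omega}\setminus D$ (and equal to it in the pure case). Taking $D=\phi(C)$ gives $|D|=q^{2k}$, so $\ell=n-2k$, and $D^{\perp_\Omega}=\phi(C^{\perp_h})$ by the identity above, whence the minimum weight of $D^{\perp_\Omega}$ equals $d^{\perp_h}$. Since $D\subseteq D^{\perp_\Omega}$, the quantum distance satisfies $d_Q\ge d^{\perp_h}$, producing the claimed $[[n,n-2k,\ge d^{\perp_h}]]_q$ code.

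The only substantive step is the bookkeeping in the second paragraph: one must confirm both that $\phi$ sends $C^{\perp_h}$ precisely to the trace-symplectic dual of $\phi(C)$, and that the Hamming weight is preserved by $\phi$ (so that distance estimates transfer). Both are short linear-algebra verifications once the basis is fixed, and no extra hypothesis beyond $C\subseteq C^{\perp_h}$ is used. Everything else is a direct appeal to the additive construction of \cite{AK,Ketkar}.
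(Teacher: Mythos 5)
The paper does not prove Theorem~\ref{thm_quantcode}; it imports it as a known result, with citations to \cite{Ketkar,Calderbank,AK}. So there is no internal proof to compare against, and the question is whether your sketch would stand on its own.

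Your overall strategy is the right one and is the standard route: flatten $C$ to an $\F_q$-additive code, translate Hermitian self-orthogonality into self-orthogonality under the trace-symplectic form, then invoke the additive stabilizer construction. The third paragraph (counting dimensions, identifying $D^{\perp_\Omega}$ with $\phi(C^{\perp_h})$, transferring distance) would also be fine once the translation is correct. The problem is the key identity you propose in the second paragraph: $\operatorname{Tr}_{\F_{q^2}/\F_q}(\boldsymbol{a}\cdot_h\boldsymbol{b})=\Omega(\phi(\boldsymbol{a}),\phi(\boldsymbol{b}))$ with $\Omega$ symplectic cannot hold in general. The left-hand side is a \emph{symmetric} $\F_q$-bilinear form: $\operatorname{Tr}(\boldsymbol{a}\cdot_h\boldsymbol{b})=\boldsymbol{a}\cdot_h\boldsymbol{b}+(\boldsymbol{a}\cdot_h\boldsymbol{b})^q=\boldsymbol{a}\cdot_h\boldsymbol{b}+\boldsymbol{b}\cdot_h\boldsymbol{a}$, and $\operatorname{Tr}(\boldsymbol{a}\cdot_h\boldsymbol{a})=2\sum_i a_i^{q+1}$ is not identically zero. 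A symplectic form is alternating, hence skew-symmetric. When $q$ is odd, a nonzero form cannot be both symmetric and skew-symmetric, so the two sides have incompatible symmetry type and the identity fails. (In characteristic $2$ the distinction dissolves, but the theorem is asserted for all $q\ge 3$.)

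The standard repair — and this is exactly what \cite{Ketkar} does — is to use the anti-symmetrized expression: one defines a trace-alternating form on $\F_{q^2}^n$ built from $\boldsymbol{a}\cdot_h\boldsymbol{b}-\boldsymbol{b}\cdot_h\boldsymbol{a}$ (normalized so that it lands in the base field), and one flattens via a normal basis $\{\beta,\beta^q\}$ rather than $\{1,\omega\}$ so that this alternating form is carried to the symplectic form and Hamming weight is carried to symplectic weight. With that substitution, the rest of your argument goes through: your observation that $\F_{q^2}$-linearity of $C$ upgrades vanishing of a traced pairing to vanishing of $\boldsymbol{a}\cdot_h\boldsymbol{b}$ itself (the $\omega$-trick, really the nondegeneracy of the trace pairing on $\F_{q^2}$ over $\F_q$) is exactly what identifies the trace-alternating dual with the Hermitian dual, and the dimension and distance bookkeeping then yield the claimed $[[n,n-2k,\ge d^{\perp_h}]]_q$ code.
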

The goal of this paper is to construct codes satisfying Theorem \ref{thm_quantcode}. Our general framework is that of evaluation codes. Let $\boldsymbol{A}\in \F_{q^2}^n$ be an ordered list of distinct points, $\boldsymbol{A} = (a_0,\ldots a_{n-1})$, called the \textit{evaluation set}. We choose another vector $\boldsymbol{v}=(v_0\ldots,v_{n-1}) \in (\F_{q^2}^*)^n$, referred to as the \textit{twist vector}. For $1 \le k \le n$, we denote $\F_{q^2}[X]_{<k}$ as the $k$-dimensional vector subspace of $\F_{q^2}[X]$, consisting of all polynomials over $\F_{q^2}$ of degree less than $k$. Our code is obtained as the image of the linear evaluation map: 
\begin{equation*}
    ev_{\boldsymbol{v},\boldsymbol{A}} :\F_{q^2}[X]_{<k} \longrightarrow \F_{q^2}^n, \ f \mapsto (v_0f(a_0),\ldots, v_{n-1}f(a_{n-1}))
\end{equation*}
This map is injective, and provides the following class of codes. 
\begin{definition}\label{def_GRSC}
    The \textit{generalized Reed-Solomon code} (GRSC) $\text{GRS}_{n,k}(\boldsymbol{v}, \boldsymbol{A})$ is the image of the above evaluation map, that is 
    \begin{equation*}
        \text{GRS}_{n,k}(\boldsymbol{v},\boldsymbol{A}):= ev_{\boldsymbol{v},\boldsymbol{A}}(\F_{q^2}[X]_{<k}) = \text{span}\{ev_{\boldsymbol{v},\boldsymbol{A}}(X^e):0\le e\le k-1\} \subseteq \F_{q^2}^n.
    \end{equation*}
\end{definition}
\begin{remark}\label{remark_GRSC}
   The length and the dimension of a GRSC, $\text{GRS}_{n,k}(\boldsymbol{v},\boldsymbol{A})$ are $n$ and $k$, respectively. Since these codes are MDS, their minimum distance is equal to $n-k+1$. Is is a standard fact that the Euclidean dual code $(\text{GRS}_{n,k}(\boldsymbol{v},\boldsymbol{A}))^{\perp_e}$ is another MDS GRSC. Moreover, 
\begin{equation*}
    (\text{GRS}_{n,k}(\boldsymbol{v},\boldsymbol{A}))^{\perp_h} = ((\text{GRS}_{n,k}(\boldsymbol{v},\boldsymbol{A}))^{\perp_h}) ^q
\end{equation*}
which implies that $(\text{GRS}_{n,k}(\boldsymbol{v},\boldsymbol{A}))^{\perp_h} $ and $(\text{GRS}_{n,k}(\boldsymbol{v},\boldsymbol{A}))^{\perp_e}$ are isometric. Therefore,\\ $(\text{GRS}_{n,k}(\boldsymbol{v},\boldsymbol{A}))^{\perp_h} $ is also MDS.  
\end{remark}

\section{Code Construction}\label{section_code_construction}
Let $q$ be a prime power, $q\ge3$. Let $\lambda>1$ be a divisor of $q-1$, and 
let $\tau>1$ and  $\rho>1$ be divisors of $q+1$. 
We  assume that $\gcd (\lambda,\tau)=1$.
We let  $\kappa_1=\gcd (\lambda,\rho), \kappa_2= \gcd(\tau,\rho)$, and $ \kappa=\kappa_1\kappa_2$.
We assume for the rest of this paper that $ \frac{\rho}{\kappa} \ge 2$, and 
we let $\sigma$ be any integer with $ \frac{\rho}{\kappa} \ge  \sigma \ge 2$.

Let $\zeta_t$ denote a primitive $t$-th root of unity.

The evaluation set for our code is: 
\begin{equation}
    \boldsymbol{A} := \{\zeta_\lambda^{i}\zeta_{\tau}^{j}\zeta_{\rho}^{k} : 
    0\leq i < \lambda, \ 0\leq j < \tau,\
    0\leq k < \sigma\}
\end{equation}

\begin{lemma}
    Assume the notation above. Then the elements of $\boldsymbol{A}$ are distinct.
\end{lemma}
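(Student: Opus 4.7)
The plan is to suppose that two triples $(i,j,k)$ and $(i',j',k')$ yield the same element of $\boldsymbol{A}$, and to show the triples must coincide. Setting $a=i-i'$, $b=j-j'$, $c=k-k'$, we reduce to showing that the only solution of
\begin{equation*}
\zeta_\lambda^{a}\zeta_\tau^{b}\zeta_\rho^{c}=1
\quad\text{with } |a|<\lambda,\ |b|<\tau,\ |c|<\sigma
\end{equation*}
is $a=b=c=0$. I would work throughout inside the cyclic group $\F_{q^2}^{*}$, using the fact that a cyclic group has a unique subgroup of each order dividing the group order; here $\lambda\mid q-1$ and $\tau,\rho\mid q+1$, so all three orders divide $q^2-1$, and $\langle\zeta_\lambda\rangle,\langle\zeta_\tau\rangle,\langle\zeta_\rho\rangle$ are the unique subgroups of orders $\lambda,\tau,\rho$ respectively.

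The heart of the argument is to isolate the $\zeta_\rho$-part: rewrite the equation as
\begin{equation*}
\zeta_\rho^{c}=\zeta_\lambda^{-a}\zeta_\tau^{-b}.
\end{equation*}
The left-hand side lies in $\langle\zeta_\rho\rangle$. Since $\gcd(\lambda,\tau)=1$, the subgroup $\langle\zeta_\lambda\rangle\cdot\langle\zeta_\tau\rangle$ is the unique subgroup of order $\lambda\tau$ in $\F_{q^2}^{*}$, and the right-hand side lies there. Their intersection is the unique subgroup of order $\gcd(\lambda\tau,\rho)$; using $\gcd(\lambda,\tau)=1$ (which forces $\gcd(\kappa_1,\kappa_2)=1$) one checks that $\gcd(\lambda\tau,\rho)=\gcd(\lambda,\rho)\gcd(\tau,\rho)=\kappa$. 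Hence $\zeta_\rho^{c}$ has order dividing $\kappa$, which is equivalent to $\rho\mid c\kappa$, i.e.\ $(\rho/\kappa)\mid c$. Because $|c|\le\sigma-1<\sigma\le\rho/\kappa$, this forces $c=0$.

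Once $c=0$, the equation collapses to $\zeta_\lambda^{a}\zeta_\tau^{b}=1$, equivalently $\zeta_\lambda^{a}=\zeta_\tau^{-b}$. This common value lies in $\langle\zeta_\lambda\rangle\cap\langle\zeta_\tau\rangle$, the unique subgroup of order $\gcd(\lambda,\tau)=1$, hence it equals $1$. Therefore $\lambda\mid a$ and $\tau\mid b$, and the bounds $|a|<\lambda$, $|b|<\tau$ give $a=b=0$.

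The main obstacle is not computational but structural: one must identify $\gcd(\lambda\tau,\rho)$ with $\kappa$, which is exactly where the coprimality hypothesis $\gcd(\lambda,\tau)=1$ and the definition $\kappa=\gcd(\lambda,\rho)\gcd(\tau,\rho)$ enter. Once this identity is in hand, the hypothesis $\sigma\le\rho/\kappa$ is \emph{precisely} what is needed to kill $c$, and the rest follows from routine coprimality arguments. Note that no explicit calculation in terms of the characteristic of $\F_{q^2}$ is required, as the entire argument takes place inside the cyclic group structure of $\F_{q^2}^{*}$.
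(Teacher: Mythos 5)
Your proof is correct, and it takes a cleaner, more structural route than the paper's. The paper argues computationally: it raises the relation $\zeta_\lambda^{i}\zeta_\tau^{j}\zeta_\rho^{k}=1$ to the $\tau$-th power, then to the $\lambda$-th power, tracks the orders of the resulting roots of unity, and deduces $\rho \mid k\lambda\kappa_2$; combined with the magnitude bound $k\lambda\kappa_2 < \mathrm{lcm}(\rho,\lambda)$ this forces $k=0$, and the remaining variables are handled similarly. You instead work directly inside the cyclic group $\F_{q^2}^{*}$: you rewrite the relation as $\zeta_\rho^{c}=\zeta_\lambda^{-a}\zeta_\tau^{-b}$, observe that the right side lies in the unique subgroup of order $\lambda\tau$ and the left side in the unique subgroup of order $\rho$, and invoke the fact that subgroups of a cyclic group intersect in the unique subgroup of order the gcd. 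The step that makes your argument click is the identity $\gcd(\lambda\tau,\rho)=\gcd(\lambda,\rho)\gcd(\tau,\rho)=\kappa$, which needs $\gcd(\lambda,\tau)=1$ (verified by a prime-by-prime valuation check) and implicitly also guarantees $\kappa\mid\rho$ so that ``$\rho\mid c\kappa$'' is equivalent to ``$(\rho/\kappa)\mid c$''. What your version buys is a conceptual explanation of \emph{why} the hypothesis $\sigma\le\rho/\kappa$ is the exact threshold: $\rho/\kappa$ is precisely the index of $\langle\zeta_\rho\rangle\cap\langle\zeta_\lambda\zeta_\tau\rangle$ in $\langle\zeta_\rho\rangle$. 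One further point of care in your write-up that the paper glosses over: you correctly work with differences $c=k-k'$ in the range $|c|<\sigma$, rather than reducing to a single exponent in $[0,\sigma)$, which is the right thing to do since $\zeta_\rho$ has order $\rho$, not $\sigma$.
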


\begin{proof}
    It suffices to show that if $\zeta_\lambda^{i}\zeta_{\tau}^{j}\zeta_{\rho}^{k} =1$
    then $i=j=k=0$.
    Suppose $\zeta_\lambda^{i}\zeta_{\tau}^{j}\zeta_{\rho}^{k} =1$.
    Then $\zeta_\lambda^{i\tau} \zeta_{\rho}^{k\tau} =1$. Since 
    $\gcd (\lambda,\tau)=1$, $\zeta_\lambda^\tau$ is a primitive $\lambda$-th root of unity, call it $\alpha$.
    Since 
    $\gcd (\rho,\tau)=\kappa_2$, $\zeta_\rho^\tau$ is a primitive $\rho/\kappa_2$-th root of unity, call it $\beta$. So $\alpha^i \beta^k =1$. Taking this to the power of $\lambda$ we 
    get $\beta^{k\lambda}=1$ which implies $\rho | k \lambda \kappa_2$.
    Recall that $k<\sigma \le \frac{\rho}{\kappa}$, which implies that 
    $k\lambda \kappa_2 < \frac{\rho\lambda}{\kappa_1}= lcm (\rho, \lambda)$.
    So $\rho$ divides $k\lambda \kappa_2$, and $k\lambda \kappa_2$ is a multiple of $\lambda$ that is 
    smaller than $lcm (\rho, \lambda)$. Therefore $k=0$.
    
    A similar argument shows that $i=0$ and $j=0$.
\end{proof}

Each point in the evaluation set is associated with a unique integer triple $(i,j,k)$, 
where $0\leq i < \lambda, \ 0\leq j < \tau,\
    0\leq k < \sigma$,
and we will order the points in our evaluation set according to the lexicographic ordering on these triples, writing them as $\boldsymbol{A}(i,j,k):=\zeta_\lambda^{i}\zeta_{\tau}^{j}\zeta_{\rho}^{k}$.

Next, we select $s_0,\ldots s_{\sigma-1}\in \F_q^*$ in the following way: 
\begin{itemize}
    \item If $\sigma=2$, then set $s_0=1, s_1=-1$. 
    \item Otherwise, set $s_0,\ldots,s_{\sigma-3} =1$, select $s_{\sigma-2} \in \F_q \setminus \{0,-(s_0+\ldots +s_{\sigma-3})\}$ and set $s_{\sigma-1} = -(s_0+\ldots s_{\sigma-2})$. 
\end{itemize}

This ensures that $\sum_{k=0}^{\sigma-1}s_k=0$ (which in fact is the only important point, any nonzero values for $s_k$ can be chosen that satisfy this).

We are now going to define our twist vector. 
We order the elements of the twist vector $\boldsymbol{v}$ in the same way as the evaluation set, with each coordinate labelled by a unique integer triple $(i,j,k)$.
In fact, the ordering does not matter as long as the twist vector $\boldsymbol{v}$ uses the
same ordering as the evaluation set $\boldsymbol{A}$.

Let $L$ be an integer. This $L$ will be a parameter that we can fix later.

We select the twist vector $\boldsymbol{v}$ so that the $(q+1)$-power of $\boldsymbol{v}$ has 
$\zeta_{\lambda}^{-i L}\cdot  s_{k}$ in the coordinate labelled $(i,j,k)$.
We will write $\boldsymbol{v}(i,j,k)^{q+1}:= \zeta_{\lambda}^{-i L}  s_{k}$.
Note that $\zeta_{\lambda}^{-i L}  s_{k}$ does not depend on $j$, so 
$\boldsymbol{v}^{q+1}$ has
the same entry for each value of $j$.
Since the elements of $\boldsymbol{v}^{q+1}$ lie in $\F_q$,
it is always possible to find a suitable $\boldsymbol{v}$ with coefficients in $\F_{q^2}$, as we now prove. 

\begin{lemma}
Given any $c\in \F_q$ the polynomial $x^{q+1}-c$ has all its roots in $\F_{q^2}$.
\end{lemma}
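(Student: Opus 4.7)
The plan is to use the fact that $\F_{q^2}^*$ is cyclic of order $q^2-1=(q-1)(q+1)$, and to study the norm-type map $\phi:\F_{q^2}^*\to\F_{q^2}^*$ given by $\phi(x)=x^{q+1}$.

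First I would dispose of the trivial case $c=0$: then $x^{q+1}$ has the single root $x=0\in\F_{q^2}$, with multiplicity $q+1$. So we may assume $c\in\F_q^*$.

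Next I would verify two properties of $\phi$. The image lies in $\F_q^*$, because for $x\in\F_{q^2}^*$ one has $(x^{q+1})^q=x^{q^2+q}=x\cdot x^q=x^{q+1}$, using $x^{q^2}=x$. The kernel of $\phi$ consists of those $x$ with $x^{q+1}=1$, and since $\F_{q^2}^*$ is cyclic of order $q^2-1$ which is a multiple of $q+1$, this kernel has exactly $q+1$ elements.

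It follows from the first isomorphism theorem that the image of $\phi$ has order $(q^2-1)/(q+1)=q-1=|\F_q^*|$, so $\phi$ surjects onto $\F_q^*$. Hence for the given $c\in\F_q^*$ there exists $\alpha\in\F_{q^2}^*$ with $\alpha^{q+1}=c$, and the full preimage $\phi^{-1}(c)=\alpha\cdot\ker(\phi)$ consists of $q+1$ elements of $\F_{q^2}^*$. Since $x^{q+1}-c$ has degree $q+1$, these $q+1$ elements account for all its roots, so all roots lie in $\F_{q^2}$. There is no real obstacle here; the argument is essentially a count in a cyclic group, the only thing to keep straight being the verification that $\phi$ does land in $\F_q^*$.
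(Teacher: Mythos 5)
Your proof is correct and follows essentially the same route as the paper: both reduce to the surjectivity of the norm map $x\mapsto x^{q+1}$ onto $\F_q^*$, together with the observation that the $(q+1)$-th roots of unity lie in $\F_{q^2}$ (which in your version appears as the kernel coset having $q+1$ elements). The only difference is that you prove the surjectivity of the norm via a kernel-count in the cyclic group, whereas the paper simply cites it as a standard fact; your explicit handling of $c=0$ is a harmless extra.
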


\begin{proof}
Since the $q+1$ roots of unity lie in $\F_{q^2}$ it suffices to show that
$x^{q+1}-c$ has one root in $\F_{q^2}$.
This follows from the fact that the norm map from $\F_{q^2}$
to $\F_q$ is surjective.
\end{proof}

\bigskip

We now seek to understand the conditions under which  our evaluation code will be contained in its Hermitian dual. 

\begin{theorem}\label{thm_OC}
    Let $X^{e_1},X^{e_2}$ be two monomials. Then the evaluation vectors for these two monomials are orthogonal under the Hermitian inner product if any one of the following conditions holds: 
    \begin{enumerate}
    \item $e_1+e_2  \not\equiv L $ (mod $\lambda$).
    \item $e_1 \not\equiv e_2$ (mod $\tau$).
    \item $e_1\equiv e_2$ (mod $\rho$).
\end{enumerate}
\end{theorem}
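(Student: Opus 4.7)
The strategy is to write the Hermitian inner product of $ev(X^{e_1})$ and $ev(X^{e_2})$ as a triple sum over $(i,j,k)$ and show it factors into three independent sums, one for each coordinate. Each of the three listed conditions will kill one of those three factors.

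First I would expand the pairing directly from the definitions:
\begin{equation*}
ev(X^{e_1}) \cdot_h ev(X^{e_2}) \;=\; \sum_{i,j,k} \boldsymbol{v}(i,j,k)^{q+1}\, \boldsymbol{A}(i,j,k)^{e_1}\, \boldsymbol{A}(i,j,k)^{q e_2}.
\end{equation*}
The key computational step is to evaluate the $q$-th power on each factor of $\boldsymbol{A}(i,j,k) = \zeta_\lambda^i \zeta_\tau^j \zeta_\rho^k$. Since $\lambda \mid q-1$ we have $\zeta_\lambda^q = \zeta_\lambda$, while $\tau,\rho \mid q+1$ forces $\zeta_\tau^q = \zeta_\tau^{-1}$ and $\zeta_\rho^q = \zeta_\rho^{-1}$. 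Combined with $\boldsymbol{v}(i,j,k)^{q+1} = \zeta_\lambda^{-iL} s_k$ from the construction, the summand becomes
\begin{equation*}
\zeta_\lambda^{\,i(e_1+e_2-L)} \; \zeta_\tau^{\,j(e_1-e_2)} \; s_k\, \zeta_\rho^{\,k(e_1-e_2)},
\end{equation*}
whose three factors depend on disjoint index variables. Hence the triple sum factorizes as
\begin{equation*}
\left(\sum_{i=0}^{\lambda-1} \zeta_\lambda^{i(e_1+e_2-L)}\right)\left(\sum_{j=0}^{\tau-1} \zeta_\tau^{j(e_1-e_2)}\right)\left(\sum_{k=0}^{\sigma-1} s_k \zeta_\rho^{k(e_1-e_2)}\right).
\end{equation*}

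Then I would dispatch the three conditions one at a time using standard root-of-unity sums. For (1), the first factor is a geometric sum in $\zeta_\lambda$ of length $\lambda$; it equals $0$ precisely when $\zeta_\lambda^{e_1+e_2-L} \neq 1$, i.e.\ when $e_1+e_2 \not\equiv L \pmod\lambda$. For (2), the same argument applied to the second factor gives vanishing whenever $e_1 \not\equiv e_2 \pmod\tau$. For (3), if $e_1 \equiv e_2 \pmod\rho$ then $\zeta_\rho^{k(e_1-e_2)} = 1$ for every $k$, so the third factor collapses to $\sum_{k=0}^{\sigma-1} s_k$, which is $0$ by the choice of the $s_k$.

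I do not expect any real obstacle: the substance is the one-line observation that the summand splits as a product of functions of $i$, $j$, $k$ separately, which in turn hinges on the careful choice of $\boldsymbol{v}^{q+1}$ being independent of $j$ and the $q$-action on the three families of roots of unity being so clean. The only mild care needed is to double-check the sign conventions in $\zeta_\tau^q = \zeta_\tau^{-1}$ and $\zeta_\rho^q = \zeta_\rho^{-1}$, which come from $\tau,\rho \mid q+1$; once these are correct, the exponents group as $e_1+e_2$ in the $\lambda$-slot and $e_1-e_2$ in the $\tau$- and $\rho$-slots, matching the three conditions in the statement exactly.
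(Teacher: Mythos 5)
Your proposal is correct and follows the paper's proof essentially step for step: expand the Hermitian pairing as a triple sum, use that the summand factors into functions of $i$, $j$, $k$ separately, and kill each factor with a root-of-unity geometric sum or with $\sum_k s_k = 0$. The only cosmetic difference is that you apply $\zeta_\lambda^q = \zeta_\lambda$, $\zeta_\tau^q = \zeta_\tau^{-1}$, $\zeta_\rho^q = \zeta_\rho^{-1}$ to the exponents up front, whereas the paper carries the exponent $e_1 + q e_2$ through the factorization and only converts to the congruences $e_1 + e_2 \equiv L \pmod\lambda$ and $e_1 \equiv e_2 \pmod{\tau,\rho}$ at the final step.
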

\begin{proof}
    The Hermitian inner product of the evaluation vectors is given by

\begin{equation*}
\begin{split}
    ev_{\boldsymbol{v},\boldsymbol{A}}\left(X^{e_1}\right)\cdot_h ev_{\boldsymbol{v},\boldsymbol{A}}\left(X^{e_2}\right) = & \sum_{i,j,k}\boldsymbol{v}(i,j,k)^{q+1} \boldsymbol{A}(i,j,k)^{e_1 +qe_2}\\
    =&\sum_{i,j,k}\zeta_{\lambda}^{-i L}  s_{k} \ (\zeta_\lambda^{i}\zeta_{\tau}^{j}\zeta_{\rho}^{k})^{e_1+qe_2}\\
    =&\left(\sum_{i=0}^{\lambda-1}\zeta_{\lambda}^{i(e_1+qe_2-L)}\right)\left(\sum_{j=0}^{\tau-1}\zeta_{\tau}^{j(e_1+qe_2)}\right)\left(\sum_{k=0}^{\sigma-1}s_{k}\zeta_{\rho}^{k(e_1+qe_2)}\right).\\
\end{split}  
\end{equation*}

We examine each of the terms: 
\begin{itemize}
    \item The first term is zero $\iff$ $e_1+qe_2-L\not\equiv 0$ (mod $\lambda$) $\iff$ 
    $e_1+e_2  \not\equiv L $ (mod $\lambda$). 
    \item The second term is zero $\iff$ $e_1+qe_2\not\equiv 0$ (mod $\tau$) 
    $\iff$ $e_1 \not\equiv e_2$ (mod $\tau$).
    \item The third term is zero if $e_1+qe_2\equiv0$ (mod $\rho$), because in that case
    the sum is $\sum_{k=0}^{\sigma-1}s_k$, which is 0. Therefore the third term is 0 if $e_1 \equiv e_2$ (mod $\rho$).
\end{itemize}
\end{proof}

In the next sections we investigate when two monomials fail to be orthogonal.

\section{Failure Points}\label{failurepoints}

We continue the notation of the previous section.

If the evaluation vectors for the monomials $X^{e_1},X^{e_2}$ are not orthogonal, then all three conditions in Theorem \ref{thm_OC} fail to hold. We are interested in knowing exactly when that happens. 

\begin{definition}
    Let $X^{e_1},X^{e_2}$ be two monomials. If all three of the following conditions hold, then we call  the point $\left(e_1,e_2\right)$ a \textbf{failure point}: 
        \begin{enumerate}
    \item $e_1+e_2  \equiv L $ (mod $\lambda$).
    \item $e_1 \equiv e_2$ (mod $\tau$).
    \item $e_1\not\equiv e_2$ (mod $\rho$).
\end{enumerate}
\end{definition}
We now gather some results about failure points. 

\begin{lemma}
    If $\fp$ is a failure point, then $\left(e_2,e_1\right)$ is a failure point. 
\end{lemma}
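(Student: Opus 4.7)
The plan is essentially immediate: each of the three defining conditions of a failure point is symmetric in $e_1$ and $e_2$, so the set of failure points is closed under swapping the two coordinates.

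More explicitly, I would just go through the three conditions in the definition and verify each one. Condition (1) reads $e_1 + e_2 \equiv L \pmod{\lambda}$; since integer addition is commutative, this is the same as $e_2 + e_1 \equiv L \pmod{\lambda}$. Condition (2) reads $e_1 \equiv e_2 \pmod{\tau}$, which is the same relation as $e_2 \equiv e_1 \pmod{\tau}$ (congruence modulo $\tau$ is a symmetric relation). Condition (3) reads $e_1 \not\equiv e_2 \pmod{\rho}$, which is likewise equivalent to $e_2 \not\equiv e_1 \pmod{\rho}$ by symmetry of congruence.

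There is no real obstacle here: the lemma is a one-line symmetry observation, and the only ``work'' is to notice that nothing in the definition of a failure point distinguishes the roles of $e_1$ and $e_2$. The lemma is presumably being recorded explicitly because it cuts in half the later enumeration of failure points (one can assume, say, $e_1 \le e_2$ when counting). So my proof would consist of one short paragraph stating that the three defining conditions are each symmetric in $(e_1,e_2)$, and therefore $(e_2,e_1)$ satisfies them whenever $(e_1,e_2)$ does.
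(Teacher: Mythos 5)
Your proof is correct and takes exactly the same approach as the paper, which simply says ``This follows from the symmetry of the conditions.'' You have merely spelled out the symmetry of each of the three conditions explicitly, which is a fine (if slightly more verbose) version of the same one-line argument.
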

\begin{proof}
    This follows from the symmetry of the conditions
\end{proof}

\begin{lemma}
    If $\fp$ is a failure point then $e_1\neq e_2$. 
\end{lemma}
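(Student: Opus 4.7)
The plan is to argue by contradiction using condition (3) of the failure point definition. Assume toward contradiction that $e_1 = e_2$. Then $e_1 - e_2 = 0$, which is divisible by every positive integer, and in particular by $\rho$. Hence $e_1 \equiv e_2 \pmod{\rho}$, which directly contradicts condition (3) in the definition of a failure point.

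So the entire proof reduces to observing that condition (3), $e_1 \not\equiv e_2 \pmod{\rho}$, already forces $e_1 \neq e_2$ as integers — the failure point definition is strictly stronger than the inequality we are asked to verify. Conditions (1) and (2) play no role here and need not be invoked.

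The main obstacle is essentially nonexistent; the only subtlety to note for the reader is that $\rho > 1$ (which is part of the standing hypotheses stated at the start of Section \ref{section_code_construction}), so $\rho$ is a genuine modulus and the noncongruence in condition (3) is meaningful. Hence a one-line proof, with perhaps a parenthetical pointer to condition (3), suffices.
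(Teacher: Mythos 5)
Your proof is correct and matches the paper's argument exactly: the paper also derives $e_1 \neq e_2$ directly from condition (3), which asserts $e_1 \not\equiv e_2 \pmod{\rho}$. You have simply spelled out the contrapositive that the paper leaves implicit in its one-line justification.
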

\begin{proof}
    Follows from condition 3. 
\end{proof}

\begin{remark}
    Using these
    lemmas, we will assume that a failure point $\fp$ satisfies $e_2 > e_1$. 
\end{remark}

\begin{lemma}\label{lemmarop}
    A failure point $\fp$ can be written uniquely as $\rop$ where $0\leq \alpha \leq \tau-1$. 
\end{lemma}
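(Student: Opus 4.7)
The plan is to read condition 2 of the failure point definition as saying precisely that $e_1$ and $e_2$ leave the same remainder on division by $\tau$, and then apply the division algorithm.

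First I would set $\alpha$ to be the common residue. Explicitly, condition 2 says $e_1 \equiv e_2 \pmod{\tau}$, so by the division algorithm there is a unique $\alpha \in \{0, 1, \ldots, \tau-1\}$ with $e_1 \equiv \alpha \pmod{\tau}$ (and hence also $e_2 \equiv \alpha \pmod{\tau}$). Write $e_1 = \alpha + u_1 \tau$ and $e_2 = \alpha + u_2 \tau$ for the corresponding integer quotients $u_1, u_2 \in \mathbb{Z}$. This gives existence of the representation $(e_1, e_2) = (\alpha + u_1\tau, \alpha + u_2\tau)$.

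For uniqueness, I would observe that if $(e_1, e_2) = (\alpha' + u_1'\tau, \alpha' + u_2'\tau)$ with $0 \le \alpha' \le \tau - 1$, then reducing $e_1$ modulo $\tau$ forces $\alpha' = \alpha$ (two representatives in the same residue class both lying in $\{0, \ldots, \tau-1\}$ must coincide), and then $u_1' = u_1$, $u_2' = u_2$ follow immediately.

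I do not expect any real obstacle here; the lemma is essentially a repackaging of condition 2 via the division algorithm, and the role of the statement is presumably to fix convenient notation $\alpha, u_1, u_2$ that will be used to analyze the remaining conditions 1 and 3 (which translate into constraints on $u_1 + u_2$ modulo $\lambda/\gcd(\lambda,\tau) = \lambda$ and on $u_1 - u_2$ modulo $\rho/\gcd(\rho,\tau) = \rho/\kappa_2$) in the subsequent sections.
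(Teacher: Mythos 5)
Your proof is correct and follows essentially the same route as the paper's: apply the division algorithm to $e_1$ and $e_2$ and use condition 2 to identify the two residues. Your write-up is just a more explicit spelling-out of the paper's one-line argument, including the straightforward uniqueness check that the paper leaves implicit.
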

\begin{proof}
    Using the division algorithm, write $e_1 = \alpha + u_1\tau$, $e_2 = \alpha' + u_2\tau$. By condition 2, we can deduce that $\alpha = \alpha'$. 
\end{proof}
\begin{definition}
    The \textbf{first failure point} is the failure point $\left(e_1,e_2\right)$ such that for any other failure point $\left(e_1',e_2'\right)$ we have $e_2 <e_2'$.
\end{definition}
\begin{lemma}\label{lemma_gap1}
    Let $\fp$ be the first failure point. Then $e_2-e_1 \in \{\tau,2\tau\}$. Moreover, if $\rho=2$ then $e_2-e_1=\tau$. 
\end{lemma}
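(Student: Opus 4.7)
The plan is to start from Lemma~\ref{lemmarop}, which lets me write any failure point as $(\alpha + u_1\tau, \alpha + u_2\tau)$ with $u_2 > u_1$; setting $m := u_2 - u_1$, the statement reduces to proving $m \in \{1,2\}$ in general, and $m = 1$ when $\rho = 2$. Condition~3 of the failure-point definition translates cleanly: $e_1 \not\equiv e_2 \pmod{\rho}$ is equivalent to $\rho \nmid m\tau$, i.e.\ $\rho/\kappa_2 \nmid m$.

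I would argue by contradiction. Assume $m \ge 3$ and aim to produce a failure point with strictly smaller maximum coordinate, contradicting the minimality defining the first failure point. The natural family of candidates is $(e_1', e_2') := (e_1 + t\tau,\, e_2 - t\tau)$ for an integer $t \ge 1$. Conditions~1 and~2 are automatic, because $e_1' + e_2' = e_1 + e_2$ and both coordinates remain $\equiv \alpha \pmod{\tau}$; the only real check is condition~3, which reduces to $\rho/\kappa_2 \nmid (m - 2t)$.

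First try $t = 1$. This succeeds unless $\rho/\kappa_2 \mid (m-2)$. In that exceptional case, I would rule out the small values of $\rho/\kappa_2$: the value $1$ contradicts the standing assumption $\rho/\kappa \ge 2$, and $\rho/\kappa_2 = 2$ would force $m$ to be both odd (from $\rho/\kappa_2 \nmid m$) and even (from $m \equiv 2 \pmod 2$), impossible. So $\rho/\kappa_2 \ge 3$, and combined with $m \equiv 2 \pmod{\rho/\kappa_2}$ and $m \ge 3$ this pushes $m \ge \rho/\kappa_2 + 2 \ge 5$. I then retry with $t = 2$: the new gap $m - 4$ is at least $1$, and $\rho/\kappa_2 \mid (m-4)$ would imply $\rho/\kappa_2 \mid 2$, contradicting $\rho/\kappa_2 \ge 3$. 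In either branch I get a failure point with smaller $e_2$, the desired contradiction.

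For the moreover clause: if $\rho = 2$, then $\rho/\kappa_2 \ge 2$ forces $\kappa_2 = 1$ (so $\tau$ must be odd) and $\rho/\kappa_2 = 2$. The admissibility condition $\rho/\kappa_2 \nmid m$ then says $m$ is odd, so the only value in $\{1,2\}$ that remains is $m = 1$, giving $e_2 - e_1 = \tau$. The main obstacle is the corner case where the naive $t = 1$ swap fails to preserve condition~3; handling it cleanly requires the short arithmetic argument ruling out small $\rho/\kappa_2$, followed by the fallback to $t = 2$.
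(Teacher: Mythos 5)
Your proof is correct and rests on the same core observation as the paper's: the symmetric shifts $(e_1,e_2)\mapsto(e_1+t\tau,\,e_2-t\tau)$ automatically preserve conditions 1 and 2 of a failure point, since $e_1'+e_2'=e_1+e_2$ and both coordinates stay congruent to $\alpha$ modulo $\tau$, so the whole burden falls on condition 3. The execution differs in two ways. The paper splits on the parity of $u_1+u_2$ and jumps in a single step to the target of gap $\tau$ (odd case) or $2\tau$ (even case), leaving the condition-3 check as ``easy to verify''; that check does require noticing, in the even branch, that $\rho\mid 2\tau$ would force $\rho\mid(u_2-u_1)\tau$ (since $u_2-u_1$ is even), contradicting the original failure point --- a step the paper does not spell out. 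You instead proceed greedily: try $t=1$, and if $\rho/\kappa_2\mid(m-2)$ blocks it, use the arithmetic bound $\rho/\kappa_2\ge 3$ (after ruling out $\rho/\kappa_2\in\{1,2\}$) and the consequent $m\ge 5$ to justify a fallback to $t=2$, where $\rho/\kappa_2\mid(m-4)$ would imply $\rho/\kappa_2\mid 2$, a contradiction. Your route is slightly longer but makes the condition-3 bookkeeping fully explicit, filling in a gap the paper glosses over; the two treatments of the moreover clause ($\rho=2$ forces $\kappa_2=1$, $\rho/\kappa_2=2$, hence $m$ odd, hence $m=1$) are essentially identical.
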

\begin{proof}
    Suppose otherwise for a contradiction. Using \ref{lemmarop}, write the first failure point as $\rop$. There are two cases:
    \begin{itemize}
        \item If $u_1+u_2$ is odd, consider the point 
        \begin{equation*}
        \begin{split}
\left(e_1',e_2'\right) & =  \left(\alpha +u_1\tau + \frac{u_2-u_1-1}{2}\tau,\alpha +u_2\tau -\frac{u_2-u_1-1}{2}\tau\right)\\
& = \left(\alpha + \frac{u_1+u_2-1}{2}\tau,\alpha + \frac{u_1+u_2+1}{2}\tau\right).  
        \end{split}
        \end{equation*}

It is easy to check that $\left(e_1',e_2'\right)$ is another failure point, and that $e_2'<e_2$. This contradicts $\left(e_1,e_2\right)$ being the first failure point. 
\item If $u_1+u_2$ is even, consider the point 
        \begin{equation*}
        \begin{split}
        \left(e_1',e_2'\right) &= \left(\alpha +u_1\tau + \frac{u_2-u_1-2}{2}\tau,\alpha +u_2\tau -\frac{u_2-u_1-2}{2}\tau\right) \\
        &= \left(\alpha + \frac{u_1+u_2-2}{2}\tau,\alpha + \frac{u_1+u_2+2}{2}\tau\right).
        \end{split}
        \end{equation*}

Again, notice that $\left(e_1',e_2'\right)$ is another failure point, and that $e_2'<e_2$, which is a contradiction. 

Finally, notice that having $\rho=2$ precludes the case of $u_1+u_2$ being even, and the result follows.  
    \end{itemize}
\end{proof}

By Lemma \ref{lemma_gap1}, we know that the first failure point will lie on one of the lines $y=x+\epsilon \tau$, where $\epsilon =1$ or 2.
Moreover, the cases $\rho=2$ and $\rho>2$ are slightly different.
When $\rho =2$ we know for sure that $\epsilon =1$.

\section{Computing the First Failure Point}\label{section_first_failure_point}

We now analyze a variety of cases. We will denote the coordinates of the first failure point $\left(T_1,T_2\right)$. In each case, we also have freedom to choose $L$, so we will find the value of $L$ that maximises $T_2$. Letting $T:= T_2+1$, we will be able to construct a QEC with minimum distance $T$. 

In each case, using Lemma \ref{lemma_gap1}, we will write the first failure point as $\left(\gamma,\gamma+\epsilon\tau\right)$, where $\gamma \ge 0$ and $\epsilon \in \{1,2\}$. If $\epsilon$ has been determined, then the task of maximising $T_2$ is equivalent to maximising the quantity $e_1+e_2$.

\subsection[Case 1: lambda even.]{Case 1: $\lambda$ even.}
Since $(\lambda,\tau)=1$, it follows that $\tau$ is odd. 
Also, it follows from $ \frac{\rho}{\kappa} \ge 2$ that $\rho>2$.
Our candidates for $L$ are the elements of the additive group $\Z/\lambda\Z$.
We look at two cases: 
\begin{itemize}
    \item Suppose $L$ lies in the subgroup $\langle 2\rangle$ and consider the representatives $\{2\tau,2\tau-2,\ldots 2\tau-(\lambda-2)\}$. We can write this as $L=2\tau-\delta$ for some $\delta \in \{0,2,4,\ldots,\lambda-2\} $. The equation for condition 1 becomes: 
    \begin{equation*}
        2\gamma + \epsilon\tau = t\lambda + 2\tau-\delta.
    \end{equation*}
    Examining parities, it is clear that $\epsilon =2$. The equation reduces to 
    \begin{equation*}
        2\gamma = t\lambda -\delta
    \end{equation*}
    When $\delta=0$, the smallest $t$ for which there is a solution is $t=0$, and the first failure point will be $(0,2\tau)$. Otherwise, the smallest $t$ is $t=1$. It is optimal to then choose $L$ as large as possible; in this case, choosing $L=2\tau-2$. It follows that $\gamma = \frac{\lambda-2}{2}$, and the first failure point will be $\left(\frac{\lambda-2}{2}, \frac{\lambda+4\tau-2}{2}\right)$.

    \item Suppose $L$ lies in the coset $\langle2\rangle+1$ and consider the set of representatives $\{\tau-\delta\}$ for $\delta \in \{0,2,4,\ldots,\lambda-2\}$. The equation becomes 
        \begin{equation*}
        2\gamma + \epsilon\tau = t\lambda + \tau-\delta.
    \end{equation*}
    Examining parities, it is clear that $\epsilon =1$. The equation reduces to 
        \begin{equation*}
        2\gamma = t\lambda -\delta
    \end{equation*}
    There is no solution for $t \leq 0$, unless $\delta=0$, in which case the first failure point will be $(0,\tau)$. Otherwise, there is a solution for $t=1$; we choose $L=\tau-2$ and it follows that the first failure point will be $\left(\frac{\lambda-2}{2},\frac{\lambda+2\tau-2}{2}\right)$
\end{itemize}
Analysing these possibilities, the best choice is $L = 2\tau-2$, the resulting first failure point will be 
\[
\boxed{\left(T_1,T_2\right) = \left(\frac{\lambda-2}{2},\frac{\lambda+4\tau-2}{2}\right)}
\]
and the greatest possible distance is
\[
\boxed{T = \frac{\lambda+4\tau}{2}}
\]

\subsection[Case 2: lambda odd with certain  conditions]{Case 2: $\lambda$  odd with certain  conditions}

Assume that $\lambda$ is odd, and at least one of the following is true: 
\begin{itemize}
    \item $\lambda < \tau$
    \item $\tau$ even
    \item $\rho=2$.
\end{itemize}
For each of the above points, we need a separate proof. 
\subsubsection[Case 2a]{$\lambda$ odd, $\lambda<\tau$}
Consider the set of monomial powers: 
\[
S = \{0,1,\ldots,\tau-1,\tau,\ldots,\tau+\lambda-1\}.
\]
Since $\lambda< \tau$, we have that every element of $S$ is strictly less than $2\tau$. Thus, the only points satisfying condition 2 are: 
\[
\left(0,\tau\right), \left(1,\tau+1\right),\ldots , \left(\lambda-1,\tau+\lambda-1\right). 
\]
To check condition 1, we are interested in the sums $e_1+e_2-L$. For each of the points above, they are: 
\[
\tau-L, \tau-L+ 2, \tau-L +4, \ldots, \tau -L + 2\left(\lambda-1\right).
\]
To avoid these sums being divisible by $\lambda$ for as long as possible, it is optimal to select $L$ such that $L \equiv \tau-2 \text{ (mod } \lambda)$, since $\lambda$ is odd. Now, to see the first failure point, write the points above in the form $\left(\gamma,\gamma+\tau\right)$. The sum becomes $2\left(\gamma+1\right)$, and so the smallest non-negative $\gamma$ for which this is divisible by $\lambda$ is $\gamma = \lambda-1$. 

So, the first failure point in this case is 
\[
\boxed{\left(T_1,T_2\right) = \left(\lambda-1,\lambda+\tau-1\right)}
\]
and the greatest possible distance is 
\[
\boxed{T = \lambda+\tau}
\]

\subsubsection[Case 2b]{$\lambda$ odd, $\tau$ even}
Write the first failure point as $\left(\gamma,\gamma+\epsilon\tau\right)$, where $\gamma \ge 0$ and $\epsilon \in \{1,2\}$. 
It follows from $ \frac{\rho}{\kappa} \ge 2$ that $\rho>2$, although we will not need this in this section
because we will prove $\epsilon=1$. 

Suppose for a contradiction that the first failure point can be written as $(e_1,e_2)=(\gamma,\gamma+2\tau)$. Consider the point 
\[
\left(e_1',e_2'\right) = \left(\gamma +\frac{\tau}{2},\gamma +2\tau-\frac{\tau}{2}\right).
\]
Since $\tau$ is even, this is an integer point. It is easy to check that $\left(e_1',e_2'\right)$ is another failure point, and that $e_2'<e_2$. This contradicts $\left(e_1,e_2\right)$ being the first failure point.

We now proceed to compute the optimal $L$, and the first failure point. We consider as our choices for $L$ the representatives $\tau-\delta$, for $\delta \in \{0,1,\ldots , \lambda-1\}$. Condition 1 becomes
\[ 2\gamma= t\lambda -\delta
\]
The term on the LHS is always even. We consider two possibilities: 
\begin{itemize}
    \item If $\delta$ is odd, then the smallest $t$ for which there is an integer solution is $t=1$. The optimal choice for $L$ would thus be $\tau-1$, and the first failure point will be $\left(\frac{\lambda-1}{2},\frac{\lambda+2\tau-1}{2}\right)$.
    \item If $\delta=0$ then the smallest solution is $t=0$, and the first failure point will be $(0,\tau)$. 
    \item If $\delta \neq 0$ is even, then the smallest solution is $t=2$, the optimal choice for $L$ is $\tau -2$, and the resulting first failure point is $(\lambda-1,\lambda+\tau-1)$
\end{itemize}
The optimal choice for $L$ in this case is therefore $L = \tau-2$. Then the first failure point will be 
\[
\boxed{\left(T_1,T_2\right) = \left(\lambda-1,\lambda+\tau-1\right)}
\]
and the greatest possible minimum distance of the code is 
\[
\boxed{T = \lambda+\tau}
\]

\subsubsection[Case 2c]{$\lambda$ odd, $\rho=2$}
By Lemma \ref{lemma_gap1}, the first failure point can be written as $(\gamma,\gamma + \tau)$. We consider as our choices for $L$ the representatives $\tau-\delta$, for $\delta \in \{0,1,\ldots , \lambda-1\}$. Condition 1 becomes
\[ 2\gamma= t\lambda -\delta
\]
The analysis is now identical to the previous case. 
The optimal choice for $L$ in this case is therefore $L = \tau-2$. Then the first failure point will be 
\[
\boxed{\left(T_1,T_2\right) = \left(\lambda-1,\lambda+\tau-1\right)}
\]
and the greatest possible minimum distance of the code is 
\[
\boxed{T = \lambda+\tau}
\]
\subsection[Case 3: lambda  odd, all remaining cases]{Case 3: $ \lambda $ odd, all remaining cases}

We now consider all remaining cases that are not covered by the previous sections.
Assume that $\lambda$ is odd and that
$\lambda >\tau$, and that $\tau $ is odd, and that $\rho > 2$.
We first prove the following technical lemma, which will simplify the analysis of this case. 

\begin{lemma}\label{worstcase}
    Suppose $\lambda > \tau$. Let $Q,Q'$ be positive integers with $Q'>Q$. Consider the four lines: 
    \begin{equation*}
        \begin{split}
            & L_1: y=-x+Q\lambda +L.\\
            & L_2: y=-x+Q'\lambda +L.\\
            & L_1': y=x + \tau.\\
            & L_2': y=x + 2\tau.\\
        \end{split}
    \end{equation*}
    Consider also the four points: 
       \begin{equation*}
        P_{ij} := (\alpha_{ij},\beta_{ij}):= L_i\cap L_j'
    \end{equation*}
    Then $\text{max}\{\beta_{11},\beta_{12}\} < \text{min}\{\beta_{21},\beta_{22}\}$
\end{lemma}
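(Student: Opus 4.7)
My plan is to prove the lemma by direct computation: solve each of the four linear systems $L_i \cap L_j'$ explicitly, identify which $\beta_{ij}$ realises the maximum on the left and which realises the minimum on the right, and then reduce the required inequality to the hypothesis $\lambda > \tau$ together with $Q' > Q$.

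First I would compute the four intersections. Setting $-x + Q\lambda + L = x + \epsilon\tau$ for $\epsilon \in \{1,2\}$ gives
\[
\beta_{1,\epsilon} = \frac{Q\lambda + L + \epsilon \tau}{2}, \qquad \beta_{2,\epsilon} = \frac{Q'\lambda + L + \epsilon \tau}{2}.
\]
From these formulas two things are immediate: increasing $\epsilon$ increases $\beta_{ij}$, so $\max\{\beta_{11},\beta_{12}\} = \beta_{12}$ and $\min\{\beta_{21},\beta_{22}\} = \beta_{21}$. Thus the inequality I need to verify reduces to
\[
\frac{Q\lambda + L + 2\tau}{2} \;<\; \frac{Q'\lambda + L + \tau}{2},
\]
which is equivalent to $\tau < (Q'-Q)\lambda$. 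Since $Q' > Q$ are integers we have $Q' - Q \ge 1$, so $(Q'-Q)\lambda \ge \lambda$, and the hypothesis $\lambda > \tau$ finishes the argument.

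The main obstacle is essentially non-existent: once the intersections are written down, the whole lemma collapses to the single observation that the vertical spacing between the two anti-diagonal lines $L_1, L_2$ (namely $(Q'-Q)\lambda$) strictly exceeds the vertical spacing between the two diagonal lines $L_1', L_2'$ (namely $\tau$). The geometric content is exactly this comparison of slopes and gaps, and the role of the hypothesis $\lambda > \tau$ is to guarantee that even a single-step gap $Q' - Q = 1$ between the $L_i$ already dominates the full $\tau$-gap between the $L_j'$. I would present the proof essentially as the two displayed computations above, with a one-line remark noting that the ordering $\beta_{i1} < \beta_{i2}$ is automatic from the formula.
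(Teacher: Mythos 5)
Your proof is correct and takes essentially the same approach as the paper's: both compute the intersection points explicitly and reduce the inequality to $\tau < (Q'-Q)\lambda$, which follows from $Q'-Q \ge 1$ and $\lambda > \tau$. The paper expresses $P_{12}, P_{21}, P_{22}$ as offsets from $P_{11}$ while you write out the $\beta_{ij}$ directly; this is a cosmetic difference, and your version is arguably a bit more transparent in identifying which $\beta$ realizes the max and which the min.
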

\begin{proof}
    Notice that 
    \begin{equation}
    \begin{split}
        &        P_{12} = \left(\alpha_{11}-\frac{\tau}{2}, \beta_{11} + \frac{\tau}{2}\right), \\
        &P_{21} = \left(\alpha_{11}+\frac{(Q'-Q)\lambda}{2}, \beta_{11} + \frac{(Q'-Q)\lambda}{2}\right), \\
        &        P_{22} = \left(\alpha_{11}-\frac{\tau}{2}+\frac{(Q'-Q)\lambda}{2}, \beta_{11} + \frac{\tau}{2}+\frac{(Q'-Q)\lambda}{2}\right)\\
    \end{split}
    \end{equation}
    The result follows since $\lambda > \tau$. 
\end{proof}
\begin{remark}
    By Lemma \ref{lemma_gap1}, the first failure point lies on one of the lines $L_1' $ or $L_2'$, corresponding to $\epsilon \in \{1,2\}$. Lemma \ref{worstcase} says that the first failure point is among the solutions to the equation 
    \[
2\gamma +\epsilon\tau = t\lambda +L.
\]
with the smallest possible value of $t$. So our choice for $L$ should minimize the value of $t$ for which there is a solution $(\gamma, \gamma + \epsilon \tau)$ to the above equation. 
    
\end{remark}

Using Lemmas \ref{lemmarop}, \ref{lemma_gap1} we can write the failure point as $\fp =(\gamma, \gamma + \epsilon \tau)$, where $\epsilon\in \{1,2\}$. Then condition 1 fails when 
\[
2\gamma +\epsilon\tau = t\lambda +L.
\]
We examine some possible choices for $L$ in the range $0\leq L \leq \lambda-1$. The best choice for $L$ is the largest possible, under the constraint of maximising the minimal solution for $t$ in the above equation. Note that with these choices for $L$, there is never a solution with $t<0$.
\begin{itemize}
    \item Suppose $\lambda > 2\tau$. Then there is always a solution on the line $t=1$. Consider the line $t=0$, equivalently solutions to the equation 
    \begin{equation*}
        2\gamma + \epsilon\tau = L.
    \end{equation*}
    \begin{itemize}
        \item Suppose $L$ is odd. It is important to select $L$ as large as possible, while avoiding a solution on the line $t=0$. By parity, it follows that $\epsilon =1$, and thus there is no solution on the line $t=0 $ so long as $L<\tau$. Thus, the maximum $L$ is $\tau-2$. The first failure point lies on the line $t=1$, with $\epsilon=2$, and it is $\left(\frac{\lambda-\tau-2}{2},\frac{\lambda+3\tau-2}{2}\right)$.

        \item Suppose $L$ is even. By parity, it follows that $\epsilon =2$, and thus there is no solution on the line $t=0 $ so long as $L<2\tau$. Thus, the maximum $L$ is $2\tau-2$. The first failure point lies on the line $t=1$, with $\epsilon=1$, and it is $\left(\frac{\lambda+\tau-2}{2},\frac{\lambda+3\tau-2}{2}\right)$.
    \end{itemize}
    \item Suppose $\lambda <2\tau.$
    \begin{itemize}
        \item Suppose $L$ is even. Since $\lambda>\tau$, there is always a solution on $t=1$ with $\epsilon=1$. To avoid a solution on $t=0$, we need to take $L<2\tau$,which is already satisfied by our choices of $L$. Therefore, the best choice for $L$ will be $\lambda-1$. The first failure point lies on the line $t=1$, with $\epsilon=1$, and it is  $\left(\frac{2\lambda-\tau-1}{2},\frac{2\lambda+\tau-1}{2}\right)$.

        \item Suppose $L$ is odd. If we pick $L < \tau$, then there is no solution on the line $t=0$. If we pick $L<2\tau-\lambda$ then there is no solution for $t=1$. There will always be a solution for $t=2$, with $\epsilon =1$. Thus, by selecting $L= 2\tau-\lambda-2$, the first failure point will be on the line $t=2$ with $\epsilon=1$. Solving, we find the first failure point to be $\left(\frac{\lambda+\tau-2}{2},\frac{\lambda+3\tau-2}{2}\right)$.
        Note that this choice for $L$ will be equivalent to $L=2\tau-2$. 
    \end{itemize}
\end{itemize}
In both subcases, it is optimal to select $L= 2\tau-2$. The resulting first failure point is 
\[
\boxed{\left(T_1,T_2\right) = \left(\frac{\lambda+\tau-2}{2},\frac{\lambda+3\tau-2}{ 2}\right)}
\]
and the greatest possible minimum distance of the code is 
\[
\boxed{T = \frac{\lambda+3\tau}{2}}.
\]
\subsection{Summary}\label{section_summary}
Here is a table summarising the results of the analysis of failure points.

\begin{center}
\begin{tabular}{||c| c| c |c |c||} 
 \hline
 Case & Conditions & Optimal $L$ & $(T_1,T_2)$ & $T$\\  
 \hline\hline
 1 & $\lambda$ even 
 & $2\tau-2$ 
 &$\left(\frac{\lambda-2}{2},\frac{\lambda+4\tau-2}{2}\right)$ 
 &  $\frac{\lambda+4\tau}{2}$\\ 
 \hline
 2 & \specialcell{$\lambda$ odd, and either $\lambda<\tau$, \\ $\tau$ even or $\rho=2$}
 & $\tau-2$  
 & $\left(\lambda-1,\lambda+\tau-1\right)$ 
 & $\lambda + \tau$ \\
 \hline
 3  & $ \lambda $ odd, $\lambda >\tau, \tau $ odd, $\rho \neq 2$ 
 & $2\tau-2$ 
 & $\left(\frac{\lambda+\tau-2}{2},\frac{\lambda+3\tau-2}{2}\right)$ 
 & $\frac{\lambda+3\tau}{2}$\\

 \hline
\end{tabular}
\end{center}
In all of the above cases, the length of our code will be $n = \lambda\tau\sigma$. We can then select 
$2\le d \le T$, and we will get a $[[n,n-2d+2,d]]_q$ quantum stabilizer code, by evaluating with the monomials $\{1,X,X^2,\ldots ,X^{d-2}\}$.

\section{Main Theorem}\label{mainsection}

Here is our main theorem, which is the construction outlined in Section 3, 
and uses the analysis of failure points from the previous section.

\begin{theorem}\label{main}
Let $q\ge 3$ be a prime power. Let $\lambda>1$ be a divisor of $q-1$, and 
let $\tau>1$ and  $\rho>1$ be divisors of $q+1$. 
Assume that $\gcd (\lambda,\tau)=1$. 
Let  $\kappa=\gcd (\lambda,\rho)\cdot  \gcd(\tau,\rho)$ and assume that $ \frac{\rho}{\kappa} \ge 2$.
Let $\sigma$ be any integer with $ \frac{\rho}{\kappa} \ge  \sigma \ge 2$.
Let $n = \lambda\tau\sigma$.
Let $T$ be chosen according to this table:
\begin{center}
\begin{tabular}{||c|c||} 
 \hline
  Conditions &  $T$ \\  
 \hline\hline
  $\lambda$ even &  $\frac{\lambda+4\tau}{2}$\\ 
 \hline
  \specialcell{$\lambda$ odd, and either $\lambda<\tau$, \\ $\tau$ even or $\rho=2$} &  $\lambda + \tau$   \\
 \hline
  $ \lambda $ odd, $\lambda >\tau, \tau $ odd, $\rho \neq 2$ &  $\frac{\lambda+3\tau}{2}$ \\

 \hline
\end{tabular}
\end{center}

Then for any $d$ with  $2\le d \le T$ there exists a $[[n,n-2d+2,d]]_q$ quantum MDS  code.
\end{theorem}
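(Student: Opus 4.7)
The plan is to build $C = \text{GRS}_{n,k}(\boldsymbol{v},\boldsymbol{A})$ using the evaluation set and twist vector from Section~\ref{section_code_construction}, with $L$ chosen according to the three-case table in Section~\ref{section_summary}, and then apply Theorem~\ref{thm_quantcode}. Specifically, set $k = d - 1$. By the first lemma of Section~\ref{section_code_construction}, $|\boldsymbol{A}| = \lambda\tau\sigma = n$, so $C$ is a GRS code of length $n$ and dimension $d - 1$, with minimum distance $n - d + 2$ by Remark~\ref{remark_GRSC}.

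The key step is to verify $C \subseteq C^{\perp_h}$. Since $C$ is spanned by $\{ev_{\boldsymbol{v},\boldsymbol{A}}(X^e) : 0 \le e \le d-2\}$, it suffices to check Hermitian orthogonality for every pair $(e_1, e_2)$ in this range. Theorem~\ref{thm_OC} yields orthogonality unless $(e_1, e_2)$ is a failure point. The diagonal case $e_1 = e_2$ is immediate, since condition~3 of Theorem~\ref{thm_OC} is then trivially satisfied. For $e_1 \neq e_2$, we may assume $e_1 < e_2$ by symmetry. The analysis of Section~\ref{section_first_failure_point} shows that, for our chosen $L$, the first failure point $(T_1, T_2)$ satisfies $T_2 = T - 1$. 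Since the hypothesis $d \le T$ gives $e_2 \le d - 2 \le T - 2 < T_2$, no failure point lies in the allowed range, and Hermitian self-orthogonality follows.

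With $C \subseteq C^{\perp_h}$ in hand, Theorem~\ref{thm_quantcode} produces a stabilizer quantum code of length $n$, dimension $n - 2k = n - 2d + 2$, and minimum distance at least $d(C^{\perp_h})$. By Remark~\ref{remark_GRSC}, $C^{\perp_h}$ is MDS of dimension $n - k = n - d + 1$, so its minimum distance is exactly $k + 1 = d$. This yields the claimed $[[n,\, n - 2d + 2,\, d]]_q$ code, which saturates the quantum Singleton bound and is therefore MDS. The genuine obstacle in the whole argument is the failure-point analysis and the optimal choice of $L$, but Section~\ref{section_first_failure_point} has already handled this case by case; given those outputs, the proof of the main theorem amounts to assembling the three cases into the table as stated.
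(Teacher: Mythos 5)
Your proof is correct and follows essentially the same route as the paper: build the GRS code with the optimal $L$ from the Section~\ref{section_summary} table, set $k = d-1$, invoke Theorem~\ref{thm_OC} plus the failure-point analysis to get Hermitian self-orthogonality, and finish with Theorem~\ref{thm_quantcode} and Remark~\ref{remark_GRSC}. If anything, you spell out one small step the paper leaves implicit, namely the chain $e_2 \le d-2 \le T-2 < T_2$ showing no failure point lands among the exponents $0,\ldots,d-2$, and the explicit handling of the diagonal $e_1 = e_2$ via condition~3.
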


\begin{proof}
By selecting $L$ in correspondence with $T$ from the table in Section \ref{section_summary}, setting $k:=d-1$, and selecting $\boldsymbol{A}, \boldsymbol{v}$ as in Section \ref{section_code_construction}, we  obtain a GRSC $C$ from Definition \ref{def_GRSC}, namely 
\[
C:= \text{GRS}_{n,k}(\boldsymbol{v},\boldsymbol{A}).
\]
By Remark \ref{remark_GRSC}, the length and dimension of this code will be $n,k$ respectively. By the same remark, the minimum distance of $C^{\perp_h}$ will be $d$. By Theorem \ref{thm_OC} and the analysis in Section \ref{section_first_failure_point}, we  deduce that $C \subseteq C^{\perp_h}$.  

Thus, by Theorem \ref{thm_quantcode},  there exists a $[[n,n-2d+2,d]]_q$ quantum MDS  code.
\end{proof}

\section{Examples}\label{examples}

We present some applications of our theorem, which consist of certain choices of 
the parameters.
Usually we write the dimension as $k$, but since we give $n$ and $d$
then $k$ is determined by the formula $n+2=k+2d$.

\subsection{New Families}
We will demonstrate that the construction in this paper yields new codes by comparing some sample 
parameters with the table of known parameters in the very recent paper \cite{wan2024}.
\subsubsection{Example }

\begin{corollary}\label{c1}
Let $q \equiv 3 \pmod{8}$, $q>3$. Then for any $2 \leq d \leq \frac{5q+1}{8}$, there exists a 
$[[\frac{3(q^2-1)}{8},k,d]]_q$ quantum MDS  code.
\end{corollary}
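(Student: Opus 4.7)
The plan is to realize the corollary as a direct application of Theorem \ref{main} with a specific choice of the parameters $\lambda,\tau,\rho,\sigma$. Concretely, I would set
\[
\lambda = \tfrac{q-1}{2}, \qquad \tau = \tfrac{q+1}{4}, \qquad \rho = q+1, \qquad \sigma = 3.
\]
The motivation for this choice is reverse engineering: I want $n = \lambda\tau\sigma = \tfrac{3(q^2-1)}{8}$ and I want to land in a row of the table whose $T$ reduces to $\tfrac{5q+1}{8}$. Since the third row produces $\tfrac{\lambda+3\tau}{2}$, plugging in the candidates above gives $\tfrac{(q-1)/2 + 3(q+1)/4}{2} = \tfrac{5q+1}{8}$, which is exactly the bound claimed, and the length works out too.

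Next I would check all the hypotheses of Theorem \ref{main}. The congruence $q \equiv 3 \pmod{8}$ writes $q = 8m+3$, so $q-1 = 2(4m+1)$ and $q+1 = 4(2m+1)$; hence $\lambda = 4m+1$ and $\tau = 2m+1$ are both positive odd integers that properly divide $q-1$ and $q+1$ respectively, and $\rho = q+1$ divides itself. The condition $q>3$ forces $q \ge 11$, which gives $\lambda,\tau,\rho>1$. For $\gcd(\lambda,\tau)$, the Euclidean algorithm gives $\gcd\bigl(\tfrac{q-1}{2},\tfrac{q+1}{4}\bigr) \mid \gcd\bigl(\tfrac{q-1}{2},\tfrac{q+1}{2}\bigr) = 1$ since consecutive even numbers halved are coprime. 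Computing $\kappa$: $\kappa_1 = \gcd\bigl(\tfrac{q-1}{2},q+1\bigr)$ reduces via $q+1 \equiv 2 \pmod{\tfrac{q-1}{2}}$ to $\gcd\bigl(\tfrac{q-1}{2},2\bigr)=1$ since $\tfrac{q-1}{2}$ is odd, while $\kappa_2 = \gcd\bigl(\tfrac{q+1}{4},q+1\bigr) = \tfrac{q+1}{4}$. Hence $\kappa = \tfrac{q+1}{4}$ and $\tfrac{\rho}{\kappa} = 4 \ge \sigma = 3$.

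I would then identify the correct case of the table: $\lambda$ is odd, $\tau$ is odd, $\rho = q+1 \ge 12 \ne 2$, and $\lambda>\tau$ is equivalent to $2(q-1) > q+1$, i.e.\ $q>3$. So the third row of the table applies, yielding $T = \tfrac{\lambda + 3\tau}{2} = \tfrac{5q+1}{8}$ as desired, and the length is $n = \lambda\tau\sigma = \tfrac{q-1}{2}\cdot\tfrac{q+1}{4}\cdot 3 = \tfrac{3(q^2-1)}{8}$. Applying Theorem \ref{main} then produces a $[[n,n-2d+2,d]]_q$ quantum MDS code for every $2 \le d \le T$.

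There is no real obstacle here beyond finding the correct parameter triple; once the ansatz $(\lambda,\tau,\rho,\sigma) = \bigl(\tfrac{q-1}{2},\tfrac{q+1}{4},q+1,3\bigr)$ is written down, each verification is a one-line arithmetic check dictated by $q \equiv 3 \pmod 8$. The only point requiring a moment of care is confirming the parity of $\lambda$ and $\tau$ (both must be odd to land in the third row rather than the first or second), and this is exactly what the congruence $q \equiv 3 \pmod 8$, rather than some weaker condition like $q$ odd, is designed to guarantee.
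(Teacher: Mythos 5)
Your proof is correct and is essentially the same application of Theorem \ref{main} that the paper uses, landing in case~3 via $\lambda=\tfrac{q-1}{2}$, $\tau=\tfrac{q+1}{4}$, $\sigma=3$. The one difference is the choice of $\rho$: you take $\rho=q+1$, which makes $\kappa=\kappa_1\kappa_2 = 1\cdot\tfrac{q+1}{4}=\tfrac{q+1}{4}$ and hence $\rho/\kappa=4$, whereas the paper takes $\rho=4$, for which $\kappa_1=\gcd(\tfrac{q-1}{2},4)=1$ and $\kappa_2=\gcd(\tfrac{q+1}{4},4)=1$ (both odd, since $q\equiv 3\pmod 8$), giving $\kappa=1$ and again $\rho/\kappa=4$. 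Both choices are legitimate and produce the same bound $\sigma\le 4$, the same case of the table, and the same $T=\tfrac{\lambda+3\tau}{2}=\tfrac{5q+1}{8}$, so the difference is cosmetic bookkeeping rather than a different argument. Your verifications of $\gcd(\lambda,\tau)=1$, the parity of $\lambda$ and $\tau$, $\lambda>\tau\iff q>3$, and $\rho\neq 2$ are all accurate.
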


\begin{proof}
    Choose $\lambda=\frac{q-1}{2}$, $\tau=\frac{q+1}{4}$, $\rho=4$. Then $\kappa=1$, so we choose $\sigma =3$.
Then we are in case 3, and so $T=(\lambda+3\tau)/2$.
\end{proof}

We remark that the length is not a multiple of $q-1$, and not a multiple of $q+1$, and $d>q/2$.

This family of codes is new. For example, when $q=11$ we get a $[[45, 33,7]]_{11}$ MDS code from Corollary \ref{c1}. 
We will now explain why this code  is not covered by any of the constructions listed in the tables in \cite{wan2024}.

In Table 1 of \cite{wan2024} the length $n$ is divisible by $q+1$, which is not the case for our code.
In rows 1-15 of Table 2 of \cite{wan2024}, 
the length $n$ satisfies $n\le q+1$, or $n$ divides $q^2+1$,
or $n$ satisfies one of the
following congruences: 
$n \equiv 0,1,2 \pmod{q+1}$, or $n\equiv 0 \pmod{q}$, or
$n \equiv 0,1 \pmod{q-1}$.
The same holds for the new codes in Table 3 of \cite{wan2024}.
Our $[[45, 33,7]]_{11}$ MDS code does not fit any of these categories.
It is easy to check that our code does not fit rows 16,17,18. 
The codes in rows 19 and 21 have even length, so our code is not in those categories.
Finally, checking all the possibilities for $s$ and $t$ in row 20 does not result in a code
of length 45 when $q=11$.

This rules out all previous constructions, and hence Corollary \ref{c1} gives new codes.
Therefore Theorem \ref{main} gives new codes.

\subsubsection{Example}

\begin{corollary}\label{c2}
Let $q$ be odd. Let $m$ be a divisor of  $\frac{q+1}{2}$
such that $1\le m<\frac{q+1}{2}$ and $\frac{q+1}{2m}$ is even.
Let $2\leq \sigma \leq 2m$. Then for any $2\leq d \leq \frac{q-1}{2}+\frac{q+1}{2m}$, there exists a 
$[[\sigma\frac{(q^2-1)}{4m},k,d]]_q$ quantum MDS  code.
\end{corollary}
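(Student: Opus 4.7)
The plan is to apply Theorem \ref{main} with a judicious choice of the parameters $(\lambda,\tau,\rho)$. To match the target length $n=\sigma(q^2-1)/(4m)$ I would set $\lambda=(q-1)/2$ and $\tau=(q+1)/(2m)$, so that $\lambda\tau\sigma$ reproduces $n$. Both are divisors of $q-1$ and $q+1$ respectively, both exceed $1$ (the latter because $m<(q+1)/2$), and $\gcd(\lambda,\tau)=1$ follows from the fact that $\gcd((q-1)/2,(q+1)/2)=1$ together with $\tau\mid(q+1)/2$.

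The delicate step is the choice of $\rho$: a naive guess such as $\rho=2m$ fails because $\gcd(\tau,2m)$ need not be $1$, and $\rho/\kappa$ can fall well short of $2m$. Instead I would take $\rho=q+1$ itself. The hypothesis that $\tau=(q+1)/(2m)$ is even forces $4m\mid q+1$, hence $q\equiv 3\pmod{4}$, so $\lambda=(q-1)/2$ is odd. A short computation using $q+1=2(\lambda+1)$ together with $\lambda$ odd gives $\gcd(\lambda,q+1)=1$, while $\gcd(\tau,q+1)=\tau$ trivially. Therefore
\[
\kappa = 1\cdot\tau = \frac{q+1}{2m}, \qquad \frac{\rho}{\kappa} = \frac{q+1}{\tau} = 2m,
\]
which exactly matches the permitted range $2\le\sigma\le 2m$ in the statement.

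With these parameters fixed, it remains to read off the relevant row of the table in Theorem \ref{main}. Since $\lambda$ is odd and $\tau$ is even by hypothesis, the middle row applies and yields $T=\lambda+\tau=(q-1)/2+(q+1)/(2m)$, which matches the upper bound on $d$ stated in the corollary. Theorem \ref{main} then produces the desired $[[\sigma(q^2-1)/(4m),k,d]]_q$ quantum MDS code for every $2\le d\le T$.

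The main obstacle is the $\rho$-selection just described; everything else is routine arithmetic verification of the hypotheses of Theorem \ref{main}. The key insight is that taking $\rho=q+1$ maximises the ratio $\rho/\kappa$, because $\gcd(\tau,\rho)$ saturates at $\tau$, and the coprimality of $\lambda$ with $q+1$, forced by the parity constraint coming from $\tau$ being even, kills the other factor of $\kappa$.
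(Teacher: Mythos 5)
Your proposal is correct and follows essentially the same route as the paper: both choose $\lambda=(q-1)/2$, $\tau=(q+1)/(2m)$, $\rho=q+1$, observe that the evenness of $\tau$ forces $q\equiv 3\pmod 4$ so that $\lambda$ is odd, compute $\kappa=\tau$ and $\rho/\kappa=2m$, and invoke the middle row of the table in Theorem \ref{main} to get $T=\lambda+\tau$. You simply spell out the gcd verifications in more detail than the paper does.
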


\begin{proof}
The hypotheses imply $q \equiv 3 \pmod{4}$.
    Choose $\lambda=\frac{q-1}{2}$, $\tau=\frac{q+1}{2m}$, $\rho=q+1$. 
    Then $\lambda$ is odd and $\tau$ is even, so we are in case 2.
    Then $\kappa = \frac{q+1}{2m}$ so $\rho/\kappa=2m$, so we are free to choose $2 \leq \sigma \leq 2m$.  We are in case 2,  so $T=\lambda + \tau$.
\end{proof}

We remark that the code length is always a multiple of $q-1$, is not a multiple
of $q+1$, and that $d>q/2$.

When $\sigma=2$, this family is new. For example, consider $q=83, m=7$. This results in a $[[492,400,47]]_{83}$ MDS code from Corollary \ref{c2}. We will now explain why this code  is not covered by any of the constructions listed in the tables in \cite{wan2024}.

Our code length is neither $0,1$ or $2$ (mod $q+1$), nor $0$ (mod $q$), nor $\leq q+1$, nor is it a multiple of $q^2+1$. By inspection, the code does not fit rows 16,17 or 18 of Table 2. The length does satisfy $n \equiv 0 $ (mod $q-1$), leaving us to compare with rows 12-15 and 19-21 of Table 2, and row 2 of Table 3. 

Writing the $[[492,400,47]]_{83}$ code in the form $n= r\frac{q^2-1}{h}$ with $h\mid (q+1)$, the only valid possibilities for $(r,h)$ to obtain our length are $(1,14), (2,28), (3,42), (6,84)$. Checking rows 12-15 of Table 2 and row 2 of Table 3, the largest obtainable distance is 45. 

In row 19, the only possibilities for $t$ are $2,82$. If $t=82$, then the distance is $\leq (\frac{t}{2}+1)(\frac{q-1}{t})+1=42$. If $t=2$, then the length will be $>2000$, following from the condition $rl\frac{q^2-1}{st}<q-1$. In row 20, the length must be odd. In row 21, the length cannot be a multiple of $q-1$ unless $l=\frac{t}{2}$, in which case $n>3444$. 

This rules out all previous constructions, and hence Corollary \ref{c2} gives new codes.

\subsubsection{Example}

\begin{corollary}\label{c3}
Let $q \equiv 5 \pmod{8}$. Let $m\mid \frac{q+1}{2}, 1<m<\frac{q+1}{2}$. Let $2\leq \sigma \leq m$. Then for any $2 \leq d \leq \frac{q-1}{2}+\frac{q+1}{m}$, there exists a 
$[[\sigma\frac{(q^2-1)}{2m},k,d]]_q$ quantum MDS  code.
\end{corollary}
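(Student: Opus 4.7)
The plan is to realize the family as an immediate specialization of Theorem~\ref{main} with the explicit choice
\[
\lambda \;=\; q-1, \qquad \tau \;=\; \frac{q+1}{2m}, \qquad \rho \;=\; q+1.
\]
With these choices $\lambda \mid q-1$ and $\tau,\rho \mid q+1$ ($\tau$ is an integer precisely because $m \mid \tfrac{q+1}{2}$), all three are strictly greater than $1$ since $1 < m < \tfrac{q+1}{2}$, and the product $\lambda\tau\sigma = \sigma\tfrac{(q-1)(q+1)}{2m}$ matches the claimed length $\sigma\tfrac{q^2-1}{2m}$.

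The only nontrivial hypothesis of Theorem~\ref{main} to verify is the coprimality $\gcd(\lambda,\tau)=1$, and this is exactly where the assumption $q \equiv 5 \pmod 8$ enters: it forces $2 \,\|\, (q+1)$, so $\tfrac{q+1}{2}$ is odd, hence every divisor of $\tfrac{q+1}{2}$, and in particular $m$, is odd, and therefore $\tau = \tfrac{q+1}{2m}$ is odd as well. Since $\gcd(q-1, q+1) = 2$ and $\tau$ is odd, one concludes $\gcd(\lambda,\tau) = 1$. A direct computation then gives $\gcd(\lambda,\rho) = 2$ and $\gcd(\tau,\rho) = \tau$, whence $\kappa = \tfrac{q+1}{m}$ and $\tfrac{\rho}{\kappa} = m$; this satisfies $\tfrac{\rho}{\kappa} \ge 2$ (as $m > 1$) and the hypothesis $2 \le \sigma \le m$ in the corollary agrees with the allowed range $2 \le \sigma \le \tfrac{\rho}{\kappa}$.

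To finish, observe that $\lambda = q-1$ is even, so we fall into Case~1 of the table in Theorem~\ref{main} and obtain
\[
T \;=\; \frac{\lambda + 4\tau}{2} \;=\; \frac{q-1}{2} + \frac{q+1}{m},
\]
which agrees precisely with the upper bound on $d$ in the statement. Applying Theorem~\ref{main} then produces, for every $2 \le d \le T$, an $[[\sigma\tfrac{q^2-1}{2m},\, n - 2d + 2,\, d]]_q$ quantum MDS code, as required. The argument is essentially mechanical once the parameters are chosen; the only subtle step is the parity argument identifying $m$ (and hence $\tau$) as odd, without which the coprimality $\gcd(\lambda,\tau)=1$ would fail.
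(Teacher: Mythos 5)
Your proof is correct and follows exactly the same route as the paper: choose $\lambda = q-1$, $\tau = \tfrac{q+1}{2m}$, $\rho = q+1$, land in Case~1, and read off $T = \tfrac{\lambda+4\tau}{2}$. Your write-up is a bit more thorough than the paper's (which simply asserts the choice and the resulting $\kappa$), in that you explicitly verify $\gcd(\lambda,\tau)=1$ via the parity argument that $q\equiv 5\pmod 8$ forces $m$, hence $\tau$, to be odd — a detail the paper leaves implicit.
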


\begin{proof}
    Choose $\lambda=q-1$, $\tau=\frac{q+1}{2m}$, $\rho=q+1$. Then $\kappa = \frac{q+1}{m}$, so we are free to choose $2\leq \sigma \leq m$. We are in case 1, and so $T=\frac{\lambda+4\tau}{2}$.
\end{proof}

We remark that the code length is always a multiple of $q-1$, is not a multiple
of $q+1$, and that $d>q/2$.

When $\sigma=2$, this family is new. For example, consider $q=29, m=3$. This results in a $[[280,234,24]]_{29}$ MDS code from Corollary \ref{c3}. We will now explain why this code  is not covered by any of the constructions listed in the tables in \cite{wan2024}. For the same reasons as outlined after Corollary \ref{c2}, we only need to compare with rows 12-15 and 19-21 of Table 2, and row 2 of Table 3. 

Writing the $[[280,234,24]]_{29}$ code in the form $n= r\frac{q^2-1}{h}$ with $h\mid (q+1)$, the only valid possibilities for $(r,h)$ to obtain our length are $(1,3), (2,6), (5,15), (10,30)$. Checking rows 12-15 of Table 2 and row 2 of Table 3, only row 14 applies and the largest obtainable distance is 19. 

In rows 19, $t$ must be an even divisor of $q-1$. If $t\ge 4$ then the distance is $\leq 22$. If $t=2$, then the length will be $>300$, following from the condition $rl\frac{q^2-1}{st}<q-1$. The same reasoning applies to row 21. Finally, in row 20, the length must be odd. 

This rules out all previous constructions, and hence Corollary \ref{c3} gives new codes.

\subsection{Matching Families}

In this subsection we will give a few examples that show that our Theorem \ref{main} constructs codes that 
are already in other recent papers.

\subsubsection{Example}

\begin{corollary}
Let $q \equiv 3 \pmod{4}$. Let m be an odd divisor of $q+1$. 
Then for any $2 \leq d \leq \frac{q+4m-1}{2}$, there exists a $[[2m(q-1),k,d]]_q$ quantum MDS  code.
\end{corollary}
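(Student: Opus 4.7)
The plan is to realize this corollary as a direct specialization of Theorem \ref{main}, by choosing the four parameters $\lambda,\tau,\rho,\sigma$ so that the length becomes $2m(q-1)$ and we land in Case 1 (the $\lambda$ even case) of the table. The natural choice is
\[
\lambda = q-1, \qquad \tau = m, \qquad \rho = q+1, \qquad \sigma = 2,
\]
which immediately gives $n=\lambda\tau\sigma=2m(q-1)$, and since $q$ is odd, $\lambda=q-1$ is even, so Case 1 applies and yields
\[
T \;=\; \tfrac{\lambda+4\tau}{2} \;=\; \tfrac{(q-1)+4m}{2} \;=\; \tfrac{q+4m-1}{2},
\]
which is exactly the distance range claimed.

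The remaining work is to verify the hypotheses of Theorem \ref{main} under these choices. First, I would check $\gcd(\lambda,\tau)=\gcd(q-1,m)=1$: since $m\mid q+1$, we have $\gcd(q-1,m)\mid \gcd(q-1,q+1)=2$, and because $m$ is odd this gcd is $1$. Next, I would compute $\kappa=\gcd(\lambda,\rho)\cdot\gcd(\tau,\rho)=\gcd(q-1,q+1)\cdot\gcd(m,q+1)=2m$, so
\[
\frac{\rho}{\kappa} \;=\; \frac{q+1}{2m}.
\]
Then I need $\rho/\kappa\ge 2$, equivalently $q+1\ge 4m$, which is where the hypothesis $q\equiv 3\pmod 4$ enters crucially: write $q+1=2^ab$ with $b$ odd and $a\ge 2$. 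Any odd divisor of $q+1$ divides $b\le (q+1)/4$, so $m\le (q+1)/4$ and hence $\rho/\kappa\ge 2$. This simultaneously certifies that $\sigma=2$ is a legal choice in the range $2\le\sigma\le\rho/\kappa$.

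Finally, with all hypotheses verified and Case 1 active, I would invoke Theorem \ref{main} to conclude that for every $d$ in $2\le d\le T=\frac{q+4m-1}{2}$, the corresponding $[[\,2m(q-1),\,2m(q-1)-2d+2,\,d\,]]_q$ quantum MDS code exists. There is really no hard step here: the argument is a bookkeeping check that our choice $(\lambda,\tau,\rho,\sigma)=(q-1,m,q+1,2)$ satisfies each hypothesis of Theorem \ref{main}. The only place a subtlety could arise is the inequality $\rho/\kappa\ge 2$, and that is exactly where the congruence condition $q\equiv 3\pmod 4$ together with the oddness of $m$ does the work.
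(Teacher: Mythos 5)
Your choices $(\lambda,\tau,\rho,\sigma)=(q-1,m,q+1,2)$ correctly land in Case 1 of Theorem \ref{main} when $m>1$, and your verification of $\gcd(\lambda,\tau)=1$ and $\rho/\kappa\ge 2$ is sound. The paper takes $\rho=4$ rather than $\rho=q+1$, but that is a cosmetic difference: $T$ in Case 1 depends only on $\lambda$ and $\tau$, and both choices of $\rho$ satisfy the hypotheses.

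However, there is a genuine gap: the corollary allows $m=1$ (the paper explicitly treats it), and in that case your choice gives $\tau=m=1$, violating the hypothesis $\tau>1$ of Theorem \ref{main}. The construction simply does not apply with your parameters when $m=1$. The paper handles this by switching to $\lambda=\frac{q-1}{2}$, $\tau=2$, $\rho=4$, $\sigma=2$, which falls into Case 2 and yields $T=\lambda+\tau=\frac{q+3}{2}=\frac{q+4\cdot 1-1}{2}$, recovering the claimed bound for $m=1$. You should add this case to complete the proof (or explicitly restrict the statement to $m>1$ if that were intended, but it is not).
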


\begin{proof}
    If $m \neq 1$, choose $\lambda=q-1$, $\tau=m$, $\rho=4$, $\sigma =2$.
Then we are in case 1, and so $T=(\lambda+4\tau)/2=\frac{q+4m-1}{2}$.

    If $m = 1$, choose $\lambda=\frac{q-1}{2}$, $\tau=2$, $\rho=4$, $\sigma =2$. 
Then we are in case 2, and so $T=\lambda+\tau=\frac{q+3}{2}$.

\end{proof}
The length here is a multiple of $q-1$.
These parameters generalise those in Theorem 4.5 of \cite{ApplicationConst2015}, achieving the same length but a larger minimum distance.

\subsubsection{Example}

\begin{corollary}
Let $q \equiv 3 \pmod{8}$. 
Let $m$ be an odd divisor of $q-1$.
Then for any $2 \leq d \leq \frac{q+4m-1}{2} $, there exists a $[[m(q+1),k,d]]_q$ quantum MDS  code.
\end{corollary}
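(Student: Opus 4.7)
The plan is to apply Theorem \ref{main} by specializing the quadruple $(\lambda,\tau,\rho,\sigma)$, mirroring the proof of the immediately preceding corollary. There, with $m$ a divisor of $q+1$, the choice $\lambda = q-1$, $\tau = m$, $\rho = 4$, $\sigma = 2$ landed us in Case 1 with $T = (q+4m-1)/2$. Since here $m$ is instead an odd divisor of $q-1$, the natural role-swap is to absorb $m$ into $\lambda$ and to split the $(q+1)$ factor of the target length across $\tau$ and $\sigma$.

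Concretely I would set
\[
\lambda = 2m, \qquad \tau = \frac{q+1}{4}, \qquad \rho = 4, \qquad \sigma = 2,
\]
and carry out the verifications in order. First, the hypothesis $q \equiv 3 \pmod{8}$ forces both $(q-1)/2$ and $(q+1)/4$ to be odd integers, so $\lambda = 2m$ is an even divisor of $q-1$ (using $m \mid (q-1)/2$) and $\tau$ is a divisor of $q+1$. Second, $\gcd(\lambda,\tau) = \gcd(2m,(q+1)/4) = \gcd(m,(q+1)/4) = 1$, since this gcd divides $\gcd(q-1,q+1) = 2$ and both $m$ and $(q+1)/4$ are odd. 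Third, $\kappa = \gcd(2m,4)\cdot\gcd((q+1)/4,4) = 2\cdot 1 = 2$, giving $\rho/\kappa = 2 = \sigma$, so $\sigma$ is admissible. Fourth, $\lambda\tau\sigma = 2m\cdot(q+1)/4\cdot 2 = m(q+1)$, matching the stated length. Since $\lambda$ is even, Case 1 of the table applies, and Theorem \ref{main} produces quantum MDS codes of length $m(q+1)$.

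The main obstacle is recovering the exact bound $T = (q+4m-1)/2$ promised by the corollary. Plugging the parameters above into Case 1 gives $T = (\lambda+4\tau)/2 = m+(q+1)/2$, which coincides with $(q+4m-1)/2$ precisely when $m=1$ but falls short when $m>1$. To close this gap in the case $m > 1$ I would experiment with alternative factorizations $m(q+1) = \lambda\tau\sigma$ — for instance Case 2 with $\lambda = m$ odd and $\tau$ chosen as the largest admissible even divisor of $q+1$ coprime to $m$, or a sharpened version of the first-failure-point analysis of Section \ref{section_first_failure_point} that exploits the mod-$8$ hypothesis more tightly — and it is this matching step that I expect to require the bulk of the work.
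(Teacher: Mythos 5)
Your parameter choice $\lambda = 2m$, $\tau = (q+1)/4$, $\rho = 4$, $\sigma = 2$ is exactly what the paper uses, and all of your verifications (divisibility, coprimality, $\kappa = 2$, $\rho/\kappa = 2 \ge \sigma$, $\lambda\tau\sigma = m(q+1)$, Case~1) are correct. The discrepancy you flagged is real and you should not try to argue it away: with this choice one gets $T = (\lambda + 4\tau)/2 = m + (q+1)/2 = (q+2m+1)/2$, and the paper's own one-line proof derives precisely this value, which matches the stated bound $(q+4m-1)/2$ only when $m=1$ and is strictly smaller by $m-1$ otherwise. The corollary statement therefore contains a misprint, evidently inherited from the immediately preceding corollary where $\lambda = q-1$, $\tau = m$ genuinely yields $(q+4m-1)/2$. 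Your search for an alternative factorization to recover $(q+4m-1)/2$ is not needed (and would not succeed within the constraints $\lambda \mid q-1$, $\tau \mid q+1$, $\lambda\tau\sigma = m(q+1)$): the bound this construction delivers, and what the corollary should read, is $2 \le d \le \frac{q+2m+1}{2}$.
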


\begin{proof}
    Choose $\lambda=2m$, $\tau=\frac{q+1}{4}$, $\rho=4$, $\sigma =2$.
Then we are in case 1, and so $T=(\lambda+4\tau)/2=m+\frac{q+1}{2}$.
\end{proof}
The length here is a multiple of $q+1$.
These parameters are the same as in Theorem 3.7 of \cite{Constacyclic2014}, when $q \equiv 3 $ (mod $8$)

\subsubsection{Example }

\begin{corollary}
Let $q \equiv 11 \pmod{24}$. Then for any $2 \leq d \leq \frac{q+11}{2}$ there exists a 
$[[6(q-1),k,d]]_q$ quantum MDS  code.
\end{corollary}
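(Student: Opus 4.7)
The plan is to identify parameters $\lambda,\tau,\rho,\sigma$ satisfying the hypotheses of Theorem \ref{main} such that $\lambda\tau\sigma = 6(q-1)$ and the resulting $T$ equals $\frac{q+11}{2}$. Since we want $T = \frac{q+11}{2}$, the most natural candidate is Case 1 (where $T = \frac{\lambda+4\tau}{2}$) with $\lambda = q-1$ and $\tau = 3$, as then $\frac{\lambda+4\tau}{2} = \frac{(q-1)+12}{2} = \frac{q+11}{2}$. To reach the length $6(q-1)$ we then take $\sigma = 2$, and $\rho$ must be chosen so that the constraint $\rho/\kappa \ge 2$ is satisfied. The obvious choice is $\rho = 4$.

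The proof I would write then proceeds by checking, using $q \equiv 11 \pmod{24}$, each hypothesis of Theorem \ref{main} for these parameters. First, $q \equiv 11 \pmod{24}$ implies $q \equiv 3 \pmod 8$ and $q \equiv 2 \pmod 3$. Hence $\lambda = q-1$ is even and divides $q-1$; $\tau = 3$ divides $q+1$ (since $3 \mid q+1$); and $\rho = 4$ divides $q+1$ (since $q+1 \equiv 4 \pmod 8$). The coprimality $\gcd(\lambda,\tau) = \gcd(q-1,3) = 1$ follows from $q-1 \equiv 1 \pmod 3$.

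Next I would compute $\kappa_1 = \gcd(q-1,4) = 2$ (because $q-1 \equiv 2 \pmod 4$) and $\kappa_2 = \gcd(3,4) = 1$, so $\kappa = 2$ and $\rho/\kappa = 2$, which permits the choice $\sigma = 2$. Since $\lambda$ is even, we are in Case 1 of the table in Theorem \ref{main}, giving
\[
T = \frac{\lambda + 4\tau}{2} = \frac{(q-1) + 12}{2} = \frac{q+11}{2},
\]
and the length is $n = \lambda\tau\sigma = 6(q-1)$, as required. The main obstacle is really just the bookkeeping of residues modulo $24$, but since $24 = \mathrm{lcm}(8,3)$, the congruence $q \equiv 11 \pmod{24}$ controls exactly the divisibilities needed. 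Applying Theorem \ref{main} with these parameters then yields the desired $[[6(q-1),k,d]]_q$ quantum MDS code for every $2 \le d \le \frac{q+11}{2}$.
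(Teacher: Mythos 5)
Your proposal is correct and uses the same parameter choice as the paper's proof ($\lambda=q-1$, $\tau=3$, $\rho=4$, $\sigma=2$, falling into Case 1 with $T=\frac{\lambda+4\tau}{2}=\frac{q+11}{2}$); the paper simply states the choice and the case, while you additionally spell out the modular arithmetic verifying each hypothesis of Theorem \ref{main}.
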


\begin{proof}
Choose $\lambda=q-1$, $\tau=3$, $\rho=4$, $\sigma =2$.
Then we are in case 1, and so $T=(q-1)/2+6$. 
\end{proof}

These parameters are already known, we match Theorem 3.3 of \cite{Wang2014}.
The length is a multiple of $q-1$ here.


\subsection[Examples with small d]{Examples with small $d$}

In this section we give some applications of Theorem \ref{main} in a different direction. 
We fix the distance $d$ to be small, say 5 or 7.

\begin{corollary}
Let $q \equiv 1 \pmod{6}$. 
Then there exists a 
$[[6\sigma,k,5]]_q$ quantum MDS  code, for any $\sigma$
with $2 \le \sigma \le (q+1)/2$.
\end{corollary}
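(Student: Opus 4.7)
The plan is to apply Theorem \ref{main} directly with a specific choice of the parameters $\lambda, \tau, \rho$. Since we need $n = 6\sigma = \lambda\tau\sigma$, we must have $\lambda\tau = 6$. The condition $q \equiv 1 \pmod{6}$ gives $6 \mid q-1$ and $q+1 \equiv 2 \pmod{6}$, so $\gcd(q+1,3)=1$. Thus the only divisor of $6$ which is a divisor of $q+1$ and exceeds $1$ is $2$. This forces $\tau = 2$ and hence $\lambda = 3$, which automatically gives $\gcd(\lambda,\tau)=1$. I would then take $\rho = q+1$, which is the maximal admissible choice.

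Next I would verify the arithmetic hypotheses of Theorem \ref{main}. Since $3\nmid q+1$ we have $\gcd(\lambda,\rho)=1$, and since $q+1$ is even we have $\gcd(\tau,\rho)=2$, so $\kappa = 1 \cdot 2 = 2$ and therefore
\[
\frac{\rho}{\kappa} = \frac{q+1}{2}.
\]
This is at least $2$ (since $q\ge 7$), and it equals the stated upper bound for $\sigma$, so the range $2\le \sigma \le (q+1)/2$ is exactly the admissible range for $\sigma$ in the theorem.

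Finally I would identify the relevant row of the table. With $\lambda = 3$ (odd) and $\tau = 2$ (even), we fall into the middle row, which yields
\[
T = \lambda + \tau = 3 + 2 = 5.
\]
Hence Theorem \ref{main} provides, for any $d$ with $2 \le d \le 5$, a $[[6\sigma, 6\sigma-2d+2, d]]_q$ quantum MDS code; specializing to $d=5$ gives the desired $[[6\sigma,k,5]]_q$ code.

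There is essentially no obstacle here beyond bookkeeping: the entire content is in showing that the uniquely forced choice $(\lambda,\tau)=(3,2)$ together with $\rho = q+1$ satisfies the hypotheses of Theorem \ref{main} and lands in the middle case of the table with $T=5$, so that $d=5$ is allowed.
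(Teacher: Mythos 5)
Your proposal is correct and takes essentially the same route as the paper: the paper's proof simply chooses $\lambda=3$, $\tau=2$, $\rho=q+1$, notes $\kappa=2$ so $\sigma$ may range over $2\le\sigma\le(q+1)/2$, and reads $T=\lambda+\tau=5$ from the middle row of the table. Your extra remarks — that $(\lambda,\tau)=(3,2)$ is forced and the gcd computations justifying $\kappa=2$ — are just the bookkeeping the paper leaves implicit.
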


\begin{proof}
    Choose $\lambda=3$, $\tau=2$, $\rho=q+1$. 
    Then $\kappa = \gcd(\lambda,\rho) \cdot \gcd(\tau,\rho)=2$ and 
    we can let $\sigma$ be
    any integer between 2 and $(q+1)/2$.
We are in case 2, and so $T=\lambda+\tau=5$, and the result follows from Theorem \ref{main}.
\end{proof}

When $q=7$ we get  $[[12,4,5]]_{7}$, $[[18,10,5]]_{7}$, $[[24,16,5]]_{7}$, codes.
When $q=13$ we get codes with those parameters, plus 
$[[30,22,5]]_{13}$, $[[36,28,5]]_{13}$, $[[42,34,5]]_{13}$  codes.
For $q>13$ and $q \equiv 1 \pmod{6}$ we obtain codes with these same parameters,
and more codes.

\begin{corollary}
Let $q\equiv 5 \pmod6$, $q>5$.
Then there exists a 
$[[6\sigma,k,7]]_q$ quantum MDS  code, for any $\sigma$
with $2 \le \sigma \le (q+1)/6$.
\end{corollary}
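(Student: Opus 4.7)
The plan is to apply Theorem \ref{main} with the parameters $\lambda=2$, $\tau=3$, and $\rho=q+1$, which will place us in Case 1 with $T=(\lambda+4\tau)/2 = 7$, giving $d=7$ at the boundary.

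First I would verify the divisibility hypotheses under $q\equiv 5\pmod 6$. Since $q\equiv 5\pmod 6$, $q$ is odd and $q\equiv 2\pmod 3$, which gives $q-1\equiv 1\pmod 3$ and $q+1\equiv 0\pmod 6$. Hence $\lambda=2\mid q-1$, and both $\tau=3\mid q+1$ and $\rho=q+1\mid q+1$. Also $\gcd(\lambda,\tau)=\gcd(2,3)=1$, as required. Note that the choice $\lambda=3,\tau=2$ would fail because $3\nmid q-1$, so our selection is forced up to the product $\lambda\tau=6$.

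Next I would compute $\kappa$ and the admissible range of $\sigma$. Since $q+1$ is even and divisible by $3$, we have $\gcd(\lambda,\rho)=\gcd(2,q+1)=2$ and $\gcd(\tau,\rho)=\gcd(3,q+1)=3$, so $\kappa=6$ and $\rho/\kappa=(q+1)/6$. The hypothesis $q>5$ ensures $\rho/\kappa\ge 2$, so $\sigma$ may be any integer with $2\le \sigma\le (q+1)/6$. The code length is then $n=\lambda\tau\sigma=6\sigma$.

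Finally I would identify the correct row of the table. Since $\lambda=2$ is even, we are in Case 1, so $T=(\lambda+4\tau)/2=(2+12)/2=7$. Applying Theorem \ref{main} with $d=7\le T$ produces a $[[6\sigma,\, 6\sigma-12,\, 7]]_q$ quantum MDS code for each $\sigma$ in the stated range, which is exactly the claim. There is no substantial obstacle here; the only thing to be careful about is the forced assignment $\lambda=2,\tau=3$ (rather than $\lambda=3,\tau=2$), which is what makes the hypothesis $q\equiv 5\pmod 6$ (as opposed to $q\equiv 1\pmod 6$, handled in the preceding corollary) place us squarely in the even-$\lambda$ branch of the table.
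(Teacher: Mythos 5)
Your proposal is correct and matches the paper's own proof: the paper also selects $\lambda=2$, $\tau=3$, $\rho=q+1$, computes $\kappa=6$ and $\rho/\kappa=(q+1)/6$, places the construction in Case 1, and reads off $T=(\lambda+4\tau)/2=7$. You have simply spelled out the divisibility checks and the role of $q>5$ that the paper leaves implicit.
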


\begin{proof}
    Choose $\lambda=2$, $\tau=3$, $\rho=q+1$.
    Then $\kappa = \gcd(\lambda,\rho) \cdot \gcd(\tau,\rho)=6$ and  $\rho/\kappa=(q+1)/6$.
Then we are in case 1, and so $T=(\lambda+4\tau)/2=7$.
\end{proof}

When $q=11$ we get a $[[12,0,7]]_{11}$ code.
When $q=17$ we get a $[[12,0,7]]_{17}$ code, and a $[[18,6,7]]_{17}$ code.
When $q=23$ we get  $[[12,0,7]]_{23}$, $[[18,6,7]]_{23}$, $[[24,12,7]]_{23}$ codes.

We give one more example of this type.

\begin{corollary}
Let $q \equiv 7 \pmod{12}$. 
Then there exists a 
$[[12\sigma,k,7]]_q$ quantum MDS  code, for any $\sigma$
with $2 \le \sigma \le (q+1)/2$.
\end{corollary}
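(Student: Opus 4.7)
The plan is to apply Theorem \ref{main} directly with the concrete choice $\lambda=3$, $\tau=4$, and $\rho=q+1$, continuing the pattern of the preceding corollaries in this subsection. Since the target length is $12\sigma=\lambda\tau\sigma$, we need $\lambda\tau=12$; under $q\equiv 7\pmod{12}$ the alternative coprime factorization $(\lambda,\tau)=(4,3)$ is ruled out because $4\nmid q-1$ (indeed $q-1\equiv 6\pmod{12}$) and $3\nmid q+1$ (since $q+1\equiv 8\pmod{12}$), so $(3,4)$ is essentially forced. Taking $\rho$ as large as possible, namely $\rho=q+1$, is what maximizes the admissible range of $\sigma$.

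With those values I would first verify the hypotheses of Theorem \ref{main}: $\lambda=3\mid q-1$, $\tau=4\mid q+1$, $\rho=q+1\mid q+1$, and $\gcd(3,4)=1$. I would then compute
\[
\kappa=\gcd(3,q+1)\cdot\gcd(4,q+1)=1\cdot 4=4,
\]
so that $\rho/\kappa=(q+1)/4$ governs the admissible range of $\sigma$. Since $\lambda=3$ is odd and $\tau=4$ is even, the second row of the table in Theorem \ref{main} applies (the ``$\tau$ even'' clause), giving $T=\lambda+\tau=7$. Selecting $d=7\le T$ then produces, for each admissible $\sigma$, a $[[12\sigma,k,7]]_q$ quantum MDS code of length $n=\lambda\tau\sigma=12\sigma$.

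There is no genuine obstacle here; the argument is a direct bookkeeping of divisibility and a single lookup in the table of Theorem \ref{main}. The only step requiring a moment of care is the computation of $\kappa$, which relies on recording separately that $\gcd(3,q+1)=1$ and $\gcd(4,q+1)=4$, both of which are immediate consequences of $q\equiv 7\pmod{12}$. The only small discrepancy worth flagging is that the range of $\sigma$ delivered by this construction is $\rho/\kappa=(q+1)/4$ rather than the larger $(q+1)/2$ appearing in the statement, so either the statement should read $(q+1)/4$ or an extra idea beyond the straightforward specialization of Theorem \ref{main} is required.
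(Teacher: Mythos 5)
Your proposal is exactly the paper's own proof: the same choice $\lambda=3$, $\tau=4$, $\rho=q+1$, the same computation $\kappa=\gcd(3,q+1)\cdot\gcd(4,q+1)=1\cdot 4=4$, and the same table lookup (Case~2, $\tau$ even) giving $T=\lambda+\tau=7$.

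Your flagged discrepancy is real and worth stressing: the paper's \emph{own proof text} also arrives at $\rho/\kappa=(q+1)/4$, so the stated upper bound $\sigma\le (q+1)/2$ in the corollary is not supported by the argument and appears to be a typographical error that should read $(q+1)/4$. No alternative choice of $\rho\mid q+1$ improves this: to increase $\rho/\kappa$ one would need $\gcd(4,\rho)<4$, which forces $\rho\le (q+1)/2$ and still yields $\rho/\kappa\le (q+1)/4$. You have correctly reproduced the intended argument and correctly identified the erroneous constant.
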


\begin{proof}
    Choose $\lambda=3$, $\tau=4$, $\rho=q+1$.
    Then $\kappa = \gcd(\lambda,\rho) \cdot \gcd(\tau,\rho)=\tau=4$ so $\rho/\kappa=(q+1)/4$.
Then we are in case 2, and so $T=\lambda+\tau=7$.
\end{proof}

For $q=7$ we get a $[[24,12,7]]_{7}$ code.
For $q=19$ we get a $[[24,12,7]]_{19}$ code and a $[[36,24,7]]_{19}$ code.

Many more examples like these can be constructed by similar choices of small values of $\lambda$ and $\tau$,
ensuring that the hypotheses of Theorem \ref{main} hold.


\section*{Acknowledgements}

This publication has emanated from research conducted with the financial support of Science Foundation Ireland under Grant number 21/RP-2TF/10019 for the first author. 
For the purpose of Open Access, the author has applied a CC BY public copyright licence to any Author Accepted Manuscript version arising from this submission.

The second  author is partially supported by  MICIN/AEI  Grants PID2022-138906NB-C22, as well as by Universitat Jaume I, Grants UJI-B2021-02, GACUJIMA/2023/06.

\section*{Data Availability} 
Data sharing is not applicable to this article as no datasets were generated or analyzed
during the current study.

\section*{Conflict of interest} 
We declare no conflicts of interest.

\bibliographystyle{plain}
\bibliography{biblio}
\end{document}